\documentclass[10 pt]{amsart}
\usepackage{graphicx}

\usepackage{amsthm,amsmath,latexsym,amssymb,amsfonts,amscd,hyperref}

\usepackage[all,cmtip]{xy}

\newtheorem{theorem}{Theorem}[section]
\newtheorem{lemma}[theorem]{Lemma}
\newtheorem{proposition}[theorem]{Proposition}

\theoremstyle{definition}

\newtheorem{example}[theorem]{Example}

\theoremstyle{remark}
\newtheorem{remark}[theorem]{Remark}

\numberwithin{equation}{section}



\begin{document}
\vspace{2cm}
\title[On deformations of $A_\infty$-algebras]{On deformations of $A_\infty$-algebras 
}

\author{Alexey A. Sharapov}
\address{Physics Faculty, Tomsk State University, Lenin ave. 36, Tomsk 634050, Russia}
\email{sharapov@phys.tsu.ru}
\thanks{The first author was supported in part by RFBR
Grant No. 16-02-00284 A and by Grant No. 8.1.07.2018 from ``The Tomsk State University competitiveness improvement programme''. 
The second author was partially supported by RSF Grant No. 18-72-10123 in association with Lebedev Physical Institute. 
}

\author{Evgeny D. Skvortsov}
\address{Albert Einstein Institute, 
Am M\"{u}hlenberg 1, D-14476, Potsdam-Golm, Germany}
\address{Lebedev Institute of Physics, 
Leninsky ave. 53, 119991 Moscow, Russia}
\email{evgeny.skvortsov@aei.mpg.de}

\subjclass[2010]{Primary 16S80; Secondary 	16E40, 53D55}


\keywords{strong homotopy algebras, algebraic deformation theory, quantum superspace, higher spin algebras}

\begin{abstract}
A simple method is proposed for deforming $A_\infty$-algebras by means of the resolution technique. The method is then applied to the associative algebras of polynomial functions on quantum superspaces. Specifically, by introducing suitable resolutions, we construct explicit deformations of these algebras in the category of minimal $A_\infty$-algebras. The relation of these deformations to higher spin gravities is briefly discussed.
\end{abstract}

\maketitle

\section{Introduction}

One of the main concerns of modern algebra is the weakening various algebraic structures in a coherent way. 
Thus for each type of `classical' algebras, like associative, Lie, Poisson etc., one can associate its strong homotopy analog called, respectively, $A_\infty$ -, $L_\infty$-, $G_\infty$-, $\cdots$-algebras.   While  the classical algebras are defined in terms of binary multiplication operations obeying certain relations, the strong homotopy algebras involve the whole family of $n$-ary operations  $\{m_n\}$, from $1$ to $\infty$, hence the name\footnote{There are also versions involving  $m_0$, they are called non-flat.}. The operations are subject to infinite sets of defining relations in such a way that the binary maps $m_2$ satisfy the `classical' relations up to homotopy determined by $m_1$ and $m_3$.  As usual, the need for such a generalization of classical algebraic structures stems from various problems of physics and mathematics \cite{MSS}. 

In physics, for instance, strong homotopy algebras typically control the structure of classical equations of motion. This is best illustrated by an example of string field theory \cite{W}, \cite{Z}, \cite{GZ}, where the classical dynamics are governed by the generalized Maurer--Cartan (MC) equation
\begin{equation}\label{GMCEq}
m_1(\Phi)+m_2(\Phi,\Phi) +m_3(\Phi,\Phi,\Phi)+\cdots =0\,.
\end{equation}
Here $\Phi$ is the  string field taking values in the corresponding algebra. The type of the algebra constituted by $m$'s depends on string's topology: it is the $A_\infty$ for open strings and $L_\infty$ for closed \cite{K}, \cite{KS}, \cite{EKS}. In this particular situation the role of the differential  $m_1$ is played by the BRST operator, while the higher structure maps correspond to the tree-level string amplitudes.  The defining conditions of strong homotopy algebra reincarnate then in the form of the gauge symmetry transformations 
$$
\delta_{\Lambda}\Phi=m_1(\Lambda)+m_2(\Phi, \Lambda)+\cdots\,,
$$
$\Lambda$ being an infinitesimal gauge parameter. Given the relation between the low energy string and field theories it is little wonder that the $L_\infty$- and $A_\infty$-algebras show up in the structure of conventional field theories as well \cite{HZ}, \cite{MS}, \cite{RZ}.

Our interest to the strong homotopy algebras is mostly inspired by applications to higher spin gravities. Like string field theory, higher spin theories involve infinite collections of fields of all spins, whose interaction is governed by higher spin symmetries. At the  level of formal consistency the problem of introducing interactions \cite{Vasiliev:1988sa} (see \cite{Vasiliev:1999ba}, \cite{Didenko:2014dwa}, \cite{Sharapov:2017yde} for a review) is known to be equivalent to constructing an appropriate $L_\infty$- or $A_\infty$-algebra. We call such theories formal higher spin gravities. The first structure map $m_1$ is then given by the de Rham differential $d$ on exterior forms and the truncated system of maps $\{m_n\}_{n=2}^\infty$ defines a strong homotopy algebra by itself. The homotopy algebras of the form $\{m_n\}_{n=2}^\infty$ are called minimal.  These constitute an important category in the world of homotopy algebras as it is known that each $L_\infty$- or $A_\infty$-algebra is quasi-isomorphic to a minimal one \cite{Konts}, \cite{Kad}.  For minimal algebras  the second structure map $m_2$ satisfies the `classical' relations, so that the equations of motion for massless higher spin fields admit a consistent truncation 
\begin{equation}\label{VEq}
d\Phi=-m_2(\Phi,\Phi)\,.
\end{equation}
This is nothing but the usual MC equation associated to a differential graded algebra, the higher spin algebra.  From the physical viewpoint, Eq. (\ref{VEq}) describes the dynamics of free higher spin fields, even though the right hand side contains the fields non-linearly.  The genuine interaction vertices come from the higher structure maps $m_n$, $n\geq 3$. By construction, these deform the free gauge symmetry of equations (\ref{VEq}) in a consistent way, so that the full nonlinear system possesses the same number of physical degrees of freedom.\footnote{It is important to stress that $L_\infty$- or $A_\infty$-algebras solve only the problem of formal consistency  and further (physical) restrictions are necessary in order to have well-defined equations. Still the existence of higher spin gravities is well justified  on the basis of CFT dual descriptions, see e.g. \cite{Bekaert:2015tva}.}  Thus, given the form of free field equations (\ref{VEq}), the problem of switching on formally consistent interactions appears to be equivalent to the deformation of the underlying higher spin algebra in the category of minimal algebras.   

At an abstract  level the deformation problem for strong homotopy algebras was discussed in \cite{PSh}, \cite{FP}. 
Here we are concerned with developing practical methods for constructing deformations of $A_\infty$-algebras, particularly minimal deformations of graded associative  algebras. 
Let us outline our approach to the problem.
Given an $A_\infty$-algebra,  we first construct its resolution in the category of double $A_\infty$-algebras.  One can regarded the latter as a natural  extension of the category of double complexes. 
The choice of  a resolution is highly ambiguous.  This ambiguity, however,  provides some flexibility when dealing with particular algebras. As with double complexes, we can then define the total $A_\infty$-structure, which, by construction, is quasi-isomorphic to the original one.  If the resolution is `good enough',  the total $A_\infty$-structure admits a plenty of linear deformations, i.e., formal deformations that terminate at the first order. In many interesting cases such deformations are easy to construct and classify, especially, if one restricts to the class of non-flat deformations. The use of non-flat linear deformations is the key point of our approach. The desired deformation of the original $A_\infty$-algebra is then induced by a linear deformation of the total $A_\infty$-structure. To make this last step we apply a sort of homotopy transfer technique, which is detailed in Sec. 4 and 5. 

In the last Sec. 6,  we illustrate the above approach by constructing minimal deformations of polynomial algebras on quantum superspaces. This class of algebras, being of some interest on its own,  is closely related to  higher spin algebras in various dimensions. In particular, we compute the cohomology relevant to the minimal deformations of these algebras  and make comments on  physical interpretation of some other cocycles.

\section{$A_\infty$-algebras and their deformations}
\label{sec-1}

Throughout this paper, $k$ is an arbitrary field of characteristic zero and all unadorned tensor products $\otimes$ and Homs are taken over $k$. We start with reminding some basic definitions and constructions concerning  $A_\infty$-algebras. 

Let $V=\bigoplus V^l$ be a $\mathbb{Z}$-graded vector space and let $T(V)=\bigoplus_{n\geq 0} V^{\otimes n}$ denote its  tensor algebra with the convention that $T^0(V)=k$.  The spaces $T(V)$ and   $\mathrm{Hom}(T(V),V)$ naturally inherit the grading of $V$. Furthermore,  the $\mathbb{Z}$-graded vector  space  $\mathrm{Hom}(T(V),V)=\bigoplus \mathrm{Hom}^l(T(V),V)$ is known to carry the structure of a graded Lie algebra with respect to the Gerstenhaber bracket \cite{Gerst}. This is defined as follows. Given a pair of homogeneous homomorphisms  $f\in \mathrm{Hom}(T^n(V),V)$ and $g\in \mathrm{Hom} (T^{m}(V),V)$, we set 
\begin{equation}\label{GB}
[f,g]=f\circ g-(-1)^{|f||g|}g\circ f\,,
\end{equation}
where 
$$
(f\circ g)(v_1\otimes v_2\otimes\cdots\otimes v_{m+n-1} )$$ 
$$ =\sum_{i=0}^{n-1}(-1)^{|g|\sum_{j=1}^i|v_j|} f(v_1\otimes \cdots\otimes v_i\otimes g(v_{i+1}\otimes \cdots\otimes v_{i+m})\otimes \cdots \otimes v_{m+n-1})
$$
and $|g|$ stands for the degree of $g$ as a linear map between graded vector spaces\footnote{Here we follow \cite{KS} in defining the degree of multi-linear maps. 
The conventional $\mathbb{Z}$-grading \cite{Gerst} on $\mathrm{Hom}(T(V),V)$ is related to ours by {\it suspension}:  $V\rightarrow V[-1]$, $V[-1]^l:=V^{l-1}$. Of course, this  results in additional minus signs in the definition of the Gerstenhaber bracket. }. 

The Gerstenhaber bracket is graded skew-symmetric, 
$$
[f,g]=-(-1)^{|f||g|}[g,f]\,,
$$
and satisfies the graded Jacobi identity
$$
[[f,g],h]=[f,[g,h]]-(-1)^{|f||g|}[g,[f,h]]\,.
$$
We denote this graded Lie algebra by $L$.

By definition, an {\it $A_{\infty}$-structure} on $V$ is given by an element $m\in \mathrm{Hom}^1(T(V),V)$ of degree $1$ satisfying the MC equation 
\begin{equation}\label{MC}
[m,m]=0\,.
\end{equation}
The pair $(V, m)$ is called the {\it $A_\infty$-algebra}. Expanding the MC element $m$ into the sum $m=m_0+m_1+m_2+\cdots$ of homogeneous multi-linear maps $m_n\in \mathrm{Hom}(T^n(V),V)$ and substituting it back into (\ref{MC}), we get an infinite sequence of homogeneous  relations on $m$'s, known as Stasheff's identities \cite{St}. An $A_{\infty}$-algebra is called {\it flat} if $m_0=0$. (By definition, the zero structure map $m_0=m_0(1)$ is just an element of $V^1$.)  In the flat case, the first structure map $m_1: V^{l}\rightarrow V^{l+1}$ squares to zero, $[m_1,m_1]=2m_1^2=0$, making $V$ into a complex of vector spaces.  A flat  $A_\infty$-algebra is called {\it minimal} if $m_1=0$. For minimal algebras the second structure map  $m_2: V\otimes V\rightarrow V$ makes the space $V[-1]$ into a graded associative algebra with respect to the product 
\begin{equation}\label{bbb}
u\bullet v=(-1)^{|u|}m_2(u\otimes v)\,.
\end{equation}
The associativity condition is encoded by the Stasheff identity $[m_2,m_2]=0$. From this perspective, a graded associative algebra is just an $A_\infty$-algebra with $m=m_2$.  More generally, an $A_\infty$-algebra with $m=m_1+m_2$ is equivalent to a  differential graded algebra $(V[-1],\bullet, d)$ with the product (\ref{bbb}) and the differential $d=m_1$. Again, the Leibniz rule 
$$
d(u\bullet v)=du\bullet v+(-1)^{|u|-1}u\bullet dv
$$
is tantamount  to the Stasheff identity $[m_1,m_2]=0$. 

In this paper, we are interested in formal deformations of $A_\infty$-algebras. Let $$\mathcal{L}=L\hat{\otimes}k[[t]]=\prod_{n=0}^\infty\mathrm{Hom}(T^n(V),V)\otimes k[[t]]$$ denote the completed tensor product of $L$ and $k[[t]]$, with $t$ being a formal deformation parameter. The $\mathbb{Z}$-grading and the Gerstenhaber bracket on 
$L$ extend naturally to the space $\mathcal{L}=\bigoplus \mathcal{L}^n$ making the latter into a graded Lie algebra over $k[[t]]$. By definition,
$$
[f\otimes \alpha, g\otimes \beta]=[f,g]\otimes \alpha\beta\,,\qquad \alpha (g\otimes \beta)=g\otimes \alpha\beta
$$
for all $f,g\in {L}$ and $\alpha,\beta\in k[[t]]$.  The natural augmentation $\epsilon: k[[t]]\rightarrow k$ induces the $k$-homomorphism  $\pi: \mathcal{L}\rightarrow {L}$ of graded Lie algebras that sends the deformation parameter to zero. The MC elements of the algebra  $\mathcal{L}$ are naturally identified with  $A_\infty$-structures on the $k[[t]]$-vector space $\mathcal{V}=V\otimes k[[t]]$. By definition,
$$
(f\otimes\alpha_0)(v_1\otimes \alpha_1, \ldots, v_n\otimes \alpha_n)=f(v_1, \ldots,v_n)\otimes \alpha_0\alpha_1\cdots\alpha_n
$$
for all $f\in \mathrm{Hom}(T^n(V),V)$, $v_i\in V$ and $\alpha_j\in k[[t]]$.  

We say that an $A_\infty$-structure $m^t$ on $\mathcal{V}$ is a {\it deformation} of an $A_\infty$-structure $m$ on $V$ if $\pi(m^t)=m$. In other words, the deformed $A_\infty$-structure has the form 
\begin{equation}\label{mt}
m^t=m+m^{{}_{(1)}}t+m^{{}_{(2)}}t^2+\cdots\,, \qquad m^{{}_{(l)}}\in {L}\,. 
\end{equation}
Two deformations $m^t$ and $\tilde{m}^t$ of one and the same $A_\infty$-structure $m$ are considered as {\it equivalent} if there exists an element $f\in \mathcal{L}^0$ such that $$
\tilde{m}^t=e^{tf}m^t e^{-tf}=\sum_{n=0}^\infty\frac{t^n}{n!}(\mathrm{ad}_f)^n (m^t)\,.
$$
We denote the space of all nonequivalent deformations of $m$ by $\mathcal{M}(\mathcal{V},m)$.

The central problem of algebraic deformation theory is the construction and classification of  all the deformations of a given algebraic structure up to equivalence. The modern approach to the problem can be summarized by the following thesis: In characteristic zero, each deformation problem is governed by a differential graded Lie algebra (DGLA) \cite{GM}.   
In the case under consideration, an appropriate DGLA  can be described as follows.  Given an MC element $m\in {L}\subset \mathcal{L}$, we can regard $\mathcal{L}$ as a DGLA with the differential $\partial: \mathcal{L}^n\rightarrow \mathcal{L}^{n+1}$ given by the adjoint action of $m$, i.e., 
$$
\partial f=[m,f]\,.
$$
The DGLA $(\mathcal{L},\partial)$ contains the differential ideal $\bar{\mathcal{L}}=t\mathcal{L}=\ker\,\pi$.  
Define the space of MC elements of $(\bar{\mathcal{L}}, \partial)$ as
$$
\mathcal{MC}(\bar{\mathcal{L}}, \partial)=\Big\{\mu\in \bar{\mathcal{L}}^1 \;\Big |\; \partial\mu+\frac12[\mu,\mu]=0\,\Big\}\,.
$$
Notice that the elements of $\bar{\mathcal{L}}^0$ constitute a Lie subalgebra, whose exponentiation gives the formal group $\Gamma_{\bar{\mathcal{L}}} =\exp \bar{\mathcal{L}}^0$, sometimes called the gauge group.  The adjoint action of $\Gamma_{\bar{\mathcal{L}}}$ on $\bar{\mathcal{L}}^1$ induces  transformations of the space $\mathcal{MC}(\bar{\mathcal{L}},\partial)$, namely,
$$
\mu \mapsto \mu'= e^\gamma(m+\mu)e^{-\gamma} -m=-m +\sum_{n=0}^\infty \frac{1}{n!}\mathrm{ad}_\gamma^n(m+\mu)
$$
for all $\gamma \in \bar{\mathcal{L}}^0$. The {\it formal moduli space} of the DGLA $(\bar{\mathcal{L}}, \partial)$ is defined now to be the quotient 
$$
\mathcal{M}(\bar{\mathcal{L}},\partial)=\mathcal{MC}(\bar{\mathcal{L}}, \partial)/\Gamma_{\bar{\mathcal{L}}}\,.
$$
Writing now the deformed $A_\infty$-structure (\ref{mt}) as $m^t=m+\mu$, one can see that the points of the formal moduli space are in one-to-one correspondence with the equivalence classes of formal deformations of the $A_\infty$-algebra $(V, m)$.  This allows one to identify the two spaces, $$\mathcal{M}(\bar{\mathcal{L}},\partial)=\mathcal{M}(\mathcal{V}, m)\,.$$   

It is clear that every morphism $\alpha: \bar{\mathcal{L}}\rightarrow \bar{\mathcal{N}} $ of two DGLAs governing the deformation problem above induces a morphism  $\hat{\alpha}: \mathcal{M}(\bar{\mathcal{L}}, \partial)\rightarrow \mathcal{M}(\bar{\mathcal{N}},\partial)$ of the corresponding moduli spaces as well as a homomorphism $H(\alpha): H(\bar{\mathcal{L}},\partial)\rightarrow H(\bar{\mathcal{N}},\partial)$  of the cohomology groups. By definition, the morphism $\alpha$ is called {\it quasi-isomorphism} if it induces an isomorphism in cohomology.  The next statement may be regarded as a basic theorem of deformation theory
\begin{theorem}\label{Th1}
Let $\alpha: \bar{\mathcal{L}}\rightarrow \bar{\mathcal{N}}$ be a quasi-isomorphism of DGLAs. Then the induced map $\hat{\alpha}: \mathcal{M}(\bar{\mathcal{L}}, \partial)\rightarrow \mathcal{M}(\bar{\mathcal{N}},\partial)$ is an isomorphism.
\end{theorem}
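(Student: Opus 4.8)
The plan is to reduce everything to an order-by-order analysis. Observe that $\bar{\mathcal{L}}=t\mathcal{L}$ and $\bar{\mathcal{N}}=t\mathcal{N}$ are complete with respect to the $t$-adic filtration, each truncation $\bar{\mathcal{L}}/t^{n}\bar{\mathcal{L}}$ being a nilpotent DGLA, and that the Maurer--Cartan set, the gauge group $\Gamma_{\bar{\mathcal{L}}}$ and hence the moduli space $\mathcal{M}(\bar{\mathcal{L}},\partial)$ all commute with the inverse limit $\varprojlim_{n}(-)/t^{n}(-)$. Thus it suffices to prove that $\hat\alpha$ is bijective by constructing the relevant Maurer--Cartan elements and gauge parameters as formal power series in $t$, one coefficient at a time. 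The only nontrivial input is that a quasi-isomorphism induces an isomorphism $H^{i}(\alpha)\colon H^{i}(\bar{\mathcal{L}},\partial)\to H^{i}(\bar{\mathcal{N}},\partial)$ in every degree; in particular, surjectivity and injectivity of $H^{i}(\alpha)$ are available in the degrees $i=0,1,2$ that govern, respectively, the gauge group, the Maurer--Cartan elements and the obstructions. (A routine bookkeeping shows that iso on $H^{1}$ plus injectivity on $H^{2}$ already yield surjectivity of $\hat\alpha$, and adding control of $H^{0}$ yields injectivity; I shall not aim for the sharpest hypothesis here, cf. \cite{GM}.)

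For surjectivity, take $\nu=\sum_{n\ge 1}\nu_{n}t^{n}\in\mathcal{MC}(\bar{\mathcal{N}},\partial)$ and build $\mu=\sum_{n\ge 1}\mu_{n}t^{n}\in\mathcal{MC}(\bar{\mathcal{L}},\partial)$ and $\gamma=\sum_{n\ge 1}\gamma_{n}t^{n}\in\bar{\mathcal{N}}^{0}$ with $e^{\gamma}(m+\alpha(\mu))e^{-\gamma}-m=\nu$, by induction on the order in $t$. The first-order part of the Maurer--Cartan equation gives $\partial\nu_{1}=0$; surjectivity of $H^{1}(\alpha)$ produces a $\partial$-closed $\mu_{1}\in\bar{\mathcal{L}}^{1}$ (which automatically solves the first-order equation on the $\bar{\mathcal{L}}$-side) together with $\gamma_{1}\in\bar{\mathcal{N}}^{0}$ so that $\alpha(\mu_{1})-\nu_{1}=\partial\gamma_{1}$. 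Assuming $\mu$ and $\gamma$ are known modulo $t^{N+1}$, with the $\bar{\mathcal{L}}$-side Maurer--Cartan equation and the equality holding modulo $t^{N+1}$, the coefficient of $t^{N+1}$ in the Maurer--Cartan equation for $\mu$ reads $\partial\mu_{N+1}=-o_{N+1}$ with $o_{N+1}$ a universal bracket-polynomial in $\mu_{1},\dots,\mu_{N}$; the graded Jacobi identity gives $\partial o_{N+1}=0$. Since applying $\alpha$ to a mod-$t^{N+1}$ solution upstairs produces a mod-$t^{N+1}$ solution downstairs that extends (namely, to $e^{-\gamma}(m+\nu)e^{\gamma}-m$, an exact solution), the class $[\alpha(o_{N+1})]$ vanishes in $H^{2}(\bar{\mathcal{N}},\partial)$, hence by injectivity of $H^{2}(\alpha)$ so does $[o_{N+1}]$; choose $\mu_{N+1}$ accordingly. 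The residual discrepancy $e^{\gamma}(m+\alpha(\mu))e^{-\gamma}-m-\nu$ at order $t^{N+1}$ is then a degree-$1$ cocycle, so surjectivity of $H^{1}(\alpha)$ allows one to absorb it by correcting $\mu_{N+1}$ by a cocycle and $\gamma_{N+1}$ by a suitable primitive. Passing to the limit gives $\hat\alpha[\mu]=[\nu]$.

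For injectivity, let $\mu,\mu'\in\mathcal{MC}(\bar{\mathcal{L}},\partial)$ be such that $\alpha(\mu)$ and $\alpha(\mu')$ are gauge equivalent in $\bar{\mathcal{N}}$ via some $\gamma\in\bar{\mathcal{N}}^{0}$. I would construct $\delta\in\bar{\mathcal{L}}^{0}$ with $e^{\delta}(m+\mu)e^{-\delta}-m=\mu'$, once more order by order: if $\delta$ is known modulo $t^{N}$ and transforms $\mu$ into an element congruent to $\mu'$ modulo $t^{N}$, the order-$t^{N}$ discrepancy $\rho_{N}\in\bar{\mathcal{L}}^{1}$ is a cocycle (difference of two Maurer--Cartan elements agreeing to lower order), and one needs $\rho_{N}=\partial\delta_{N}$ to continue. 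The key point is that $[\alpha(\rho_{N})]\in H^{1}(\bar{\mathcal{N}},\partial)$ is precisely the obstruction to extending the partial gauge equivalence $\alpha(\delta)\bmod t^{N}$ one order further, and that this class does not depend on the chosen partial equivalence — changing it alters $\rho_{N}$ by a coboundary. Since $\gamma$ is an all-orders gauge equivalence, its obstruction vanishes; hence $[\alpha(\rho_{N})]=0$, and injectivity of $H^{1}(\alpha)$ gives $[\rho_{N}]=0$ in $H^{1}(\bar{\mathcal{L}},\partial)$, so $\rho_{N}=\partial\delta_{N}$. Correcting $\delta$ by $\delta_{N}t^{N}$ and passing to the limit exhibits $\delta$ with $[\mu]=[\mu']$ in $\mathcal{M}(\bar{\mathcal{L}},\partial)$.

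The surjectivity half is a routine obstruction computation once the $t$-adic reduction is in place; the delicate point is the injectivity half, specifically the claim that the order-$N$ obstruction to lifting a gauge equivalence is a well-defined element of $H^{1}$, so that its vanishing downstairs (witnessed by $\gamma$) can be transported upstairs through the isomorphism $H^{1}(\alpha)$. A cleaner organisation of this step — and the route I would keep in reserve — is to reduce, via the mapping cone of $\alpha$ (which is acyclic), to the two special cases of a surjective quasi-isomorphism with acyclic kernel and an injective one with acyclic cokernel, and then invoke the standard fact that an acyclic pro-nilpotent DGLA has a single gauge-equivalence class of Maurer--Cartan elements, together with the long exact cohomology sequence. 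Either way, the two structural ingredients are the same: $t$-adic completeness, which legitimises the order-by-order induction and the passage to the limit, and the isomorphism on cohomology in degrees $0,1,2$, which controls the objects, morphisms and obstructions of the deformation problem.
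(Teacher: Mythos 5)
The paper does not actually prove Theorem \ref{Th1}: it defers to \cite{Konts} and \cite{GM}, so there is no in-text argument to compare against and your proposal must stand on its own. Its surjectivity half is the standard order-by-order obstruction computation and is essentially sound: the obstruction cocycle $o_{N+1}$ is a universal bracket-polynomial in $\mu_1,\dots,\mu_N$, its image $\alpha(o_{N+1})$ is the obstruction of the truncation $\alpha(\mu)\bmod t^{N+1}$, which coincides with the truncation of the exact solution $e^{-\gamma}(m+\nu)e^{\gamma}-m$ and hence has vanishing obstruction class, and injectivity of $H^2(\alpha)$ plus surjectivity of $H^1(\alpha)$ finish the step as you describe.

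The injectivity half, however, hinges on a claim that is false in general, and it is exactly the point you flag as delicate: the order-$N$ obstruction class $[\rho_N]\in H^1$ to extending a partial gauge equivalence is \emph{not} independent of the chosen partial equivalence. Two partial equivalences $\delta,\tilde\delta$ taken modulo $t^N$ between the same pair of Maurer--Cartan elements differ (via Baker--Campbell--Hausdorff) by an element $e^{\sigma}$ stabilizing $e^{\delta}(m+\mu)e^{-\delta}-m$ modulo $t^N$, and the two obstruction cocycles then differ by terms of the form $[\sigma_1,\mu_1]$ with $\partial\sigma_1=0$ and $\partial\mu_1=0$, i.e.\ by the image of the cup-type product $H^0\otimes H^1\to H^1$ induced by the bracket --- which is nonzero already for a graded Lie algebra with zero differential concentrated in degrees $0$ and $1$. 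Consequently, the vanishing of the obstruction for $\gamma$ downstairs does not transport to the vanishing of $[\alpha(\rho_N)]$, since $\alpha(\delta^{(N-1)})$ and $\gamma\bmod t^N$ are in general \emph{different} partial equivalences between $\alpha(\mu)$ and $\alpha(\mu')$. This is precisely where $H^0(\alpha)$ must enter the argument in a substantive way: one either strengthens the induction hypothesis to carry along a stabilizer element matching $\alpha(\delta^{(N-1)})$ with $\gamma$ (lifted through surjectivity of $H^0(\alpha)$), or follows the route you keep in reserve --- factor $\alpha$ through a path object or mapping cylinder (note that the mapping \emph{cone} of a DGLA morphism is not itself a DGLA) into an injective and a surjective quasi-isomorphism and handle the surjective case with acyclic kernel directly, which is how \cite{GM} in fact proceeds. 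As written, the injectivity argument has a genuine gap.
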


In other words, quasi-isomorphic DGLAs  give rise  to equivalent deformation problems. The proof can be found in \cite{Konts}, \cite{GM}. 

\section{Resolution of an $A_\infty$-algebra}\label{co-res}
We say that two $A_\infty$-structures $m'$ and  $m''$ on the same vector space  $V$ are {\it compatible},  if 
\begin{equation}\label{C-C}
    [m',m'']=0\,.
\end{equation}
In this case, the sum $m=m'+m''$ is again an   $A_\infty$-structure (actually, any linear combination of the two gives an $A_\infty$-structure).  

Suppose now that the $\mathbb{Z}$-grading on $V$ comes from a bi-grading, that is, $V=\bigoplus V^{p,q}$ and $V^l=\bigoplus_{p+q=l}V^{p,q}$. We will denote the bi-degree of a homogeneous element 
$a\in V$ by
\begin{equation}
    \deg a=(p,q)
\end{equation}
and refer to $|a|=p+q$ as the {\it total degree} of $a$. The double gradation of $V$ allows us to consider $A_\infty$-structures that are homogeneous with respect to the first and second degrees.

Given a bi-graded  vector space $V=\bigoplus V^{p,q}$, a {\it double $A_\infty$-structure} on $V$ is given by a pair of compatible $A_\infty$-structures $m'$ and $m''$ of bi-degrees
\begin{equation}
    \deg m'=(1,0)\,,\qquad \deg m''=(0,1)\,.
\end{equation}

In the special  case that $m'=m'_1$ and $m''=m''_1$ the double $A_\infty$-algebra $(V, m',m'')$ degenerates to a double complex of vector spaces.  In the following we will mostly interested in the case where the second $A_\infty$-structure is given simply by a differential $d=m''_1$, while the first one is arbitrary. Then the  compatibility condition (\ref{C-C}) takes  the form
$$
    dm'_k(v_1\otimes v_2\otimes \cdots\otimes v_k)=-\sum_{l=1}^k(-1)^{|v_1|+\cdots +|v_{l-1}|}m'_k(v_1\otimes \cdots \otimes dv_l\otimes \cdots \otimes v_k)
$$
for all $k=0,1,\ldots$, and $v_i\in V$. 

By construction, the complex $(V,d)$ splits into the direct sum of subcomplexes $(V^{p,\bullet}, d)$ labeled by the first degree. Let us further assume that $V^{p,q}=0$ for all $q<0$, and that the differential $d: V^{p,q}\rightarrow V^{p,q+1}$ is acyclic in positive $q$ degrees, that is, $H^\bullet(V,d)\simeq H^0(V,d)=\ker d\cap V^{\bullet,0}$. Due to the compatibility and degree conditions the first $A_\infty$-structure  $m'$ on $V$ can be consistently restricted onto the subspace $W=H^0(V,d)$ making the latter into an $A_\infty$-algebra. Let us denote this restriction by ${m}^{}_W=m'|_W$. In this situation we say that the triple $(V, m', d)$ is a {\it resolution} of the $A_\infty$-algebra $(W,{m}_W)$ and refer to the second degree as the {\it resolution degree}.  

The main idea behind our approach is to deform the $A_\infty$-algebra $(W,m_W)$ by deforming its suitable resolution. The construction goes as follows. 

Summing up the compatible $A_\infty$-structures, we endow $V$ with the `total' $A_\infty$-structure $m=m'+d$. Then, following the general philosophy discussed in the previous section, we introduce the DGLA $\mathcal{L}=\bigoplus\mathcal{L}^n$, where 
$$
\mathcal{L}^n=\bigoplus_{\stackrel{{p+q=n}}{ {}_{q\geq 0}}}\mathcal{L}^{p,q}\,,\qquad \mathcal{L}^{p,q}=\mathrm{Hom}^{p,q}(T(V),V)\hat{\otimes}k[[t]]\,.
$$
Notice that we include into $\mathcal{L}$ only homomorphisms of non-negative resolution degree. The Lie bracket in $\mathcal{L}$ is given by the Gerstenhaber bracket, while the adjoint action of $m$ endows  $\mathcal{L}$ with the differential
$$
\partial =[m, - ]\,,\qquad \partial :\mathcal{L}^n\rightarrow \mathcal{L}^{n+1}\,.
$$
Recall that $\bar{\mathcal{L}}=t\mathcal{L}$ denotes the differential ideal of $\mathcal{L}$ that governs the formal deformations of the $A_\infty$-algebra $(V, m)$.

As the next step, we evaluate the cohomology of the DGLA $(\bar{\mathcal{L}}, \partial)$. To this end, we split the  differential into the sum $\partial =\partial'+\partial''$ of the vertical and horizontal differentials 
$$
\partial'=[m', -]\,,\qquad \partial'' =[d, - ]
$$
and, using the bicomplex structure, endow $\bar{\mathcal{L}}$ with a decreasing filtration associated to the first degree:
$$
F^p\bar{\mathcal{L}}=\bigoplus_{s\geq p} \bar{\mathcal{L}}^{s, \bullet}\,,\qquad F^p\bar{\mathcal{L}} \supset F^{p+1}\bar{\mathcal{L}}\,, \qquad \bigcup_{p\in \mathbb{Z}} F^p\bar{\mathcal{L}}=\bar{\mathcal{L}}\,.
$$
Since the resolution degree is bounded below, the filtration is regular and
yields a spectral sequence $\{E^{\bullet,\bullet}_r\}$ with $E_2^{p,q}=H^p_{\partial'} H^q_{\partial''}(\bar{\mathcal{L}})$.  We claim that 

\begin{proposition}\label{prop1}
The cohomology of the complex $(\bar{\mathcal{L}}, \partial'')$ is centered in resolution degree zero, so that $H^q_{\partial''}(\bar{\mathcal{L}})=0$ for $q>0$. 
\end{proposition}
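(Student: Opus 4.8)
The plan is to compute $H_{\partial''}(\bar{\mathcal{L}})$ directly, exploiting the fact that $\partial'' = [d,-]$ is the differential induced on the Hochschild-type complex $\mathrm{Hom}(T(V),V)$ by the acyclic differential $d$ on $V$. First I would set up the bookkeeping: an element of $\mathrm{Hom}^{p,q}(T(V),V)$ of resolution degree $q$ is a sum of homogeneous maps $f\colon V^{\otimes n}\to V$ shifting the resolution degree by $q$, and the ideal $\bar{\mathcal{L}} = t\mathcal{L}$ just tensors everything with $tk[[t]]$, which plays no role in the $\partial''$-cohomology computation (it is flat over $k$ and $\partial''$ is $k[[t]]$-linear), so it suffices to show $H^q_{\partial''}\bigl(\mathrm{Hom}(T(V),V)\bigr)=0$ for $q>0$.

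The key step is to identify the complex $\bigl(\mathrm{Hom}(T(V),V),\partial''\bigr)$ with a direct sum/product of tensor products of the complexes $(V,d)$ and its dual. Concretely, for fixed arity $n$, $\mathrm{Hom}(V^{\otimes n},V)\cong (V^*)^{\otimes n}\otimes V$ as complexes, where $V^*$ carries the differential dual to $d$ and $\partial''$ acts as the total differential of this tensor product (with Koszul signs — this is exactly the content of the displayed compatibility formula with $m'_k$ replaced by a general $f$). Since $d$ is acyclic in positive resolution degree with $H^\bullet(V,d)=H^0(V,d)=W$, by the Künneth theorem over a field the cohomology of $(V^*)^{\otimes n}\otimes V$ is concentrated in resolution degree $0$ and equals $(W^*)^{\otimes n}\otimes W$. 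The resolution degree of a tensor product being the sum of the resolution degrees of the factors, and each factor having cohomology in degree $0$ only, forces $H^q=0$ for $q\neq 0$. Summing (or taking the product) over $n$ and restoring the first-degree grading $p$ and the $k[[t]]$-coefficients preserves this vanishing, because both direct sums and the relevant completed products are exact functors here and the resolution degree on $\mathrm{Hom}^{p,q}$ is still just the internal degree $q$.

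The main obstacle — really the only subtle point — is the interplay between the \emph{product} over arities $n$ implicit in $\mathcal{L}=\prod_n \mathrm{Hom}(T^n(V),V)\hat\otimes k[[t]]$ and the formation of cohomology: cohomology commutes with arbitrary direct sums automatically, but with infinite products only under a flatness/exactness hypothesis. I would handle this by noting that $\partial''$ preserves arity, so the complex splits as a product \emph{of complexes indexed by $n$}, and cohomology of a product of complexes is the product of the cohomologies whenever we work over a field (each $H^q$ of a complex of $k$-vector spaces is computed termwise and products of exact sequences of vector spaces are exact). The $\hat\otimes k[[t]]$ completion is likewise harmless since $k[[t]]$ is flat and the grading/filtration arguments are $t$-adically continuous. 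A secondary, purely cosmetic point is getting the Koszul signs in the $\partial''$-differential to match those in the Gerstenhaber bracket; I would either cite the convention already fixed after equation~(\ref{GB}) or absorb the signs into the identification $\mathrm{Hom}(V^{\otimes n},V)\cong(V^*)^{\otimes n}\otimes V$, since signs cannot affect a vanishing statement. This yields $H^q_{\partial''}(\bar{\mathcal{L}})=0$ for $q>0$, and as a byproduct $H^0_{\partial''}(\bar{\mathcal{L}})\cong \mathrm{Hom}(T(W),W)\hat\otimes tk[[t]]$, which is presumably what one wants for the next page of the $E_2$ computation.
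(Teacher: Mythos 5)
Your strategy lands on a true statement, but it differs from the paper's argument and contains one step that does not survive scrutiny as written. The paper does not invoke K\"unneth at all: it fixes side conditions $hd+dh=1_V-ip$, $ph=0$, $hi=0$, $h^2=0$ for the contracting homotopy of $(V,d)$ and observes that \emph{post}composition with $h$, namely $(\tilde h f)(v_1\otimes\cdots\otimes v_n)=hf(v_1\otimes\cdots\otimes v_n)$, already satisfies $\tilde h\partial''+\partial''\tilde h=1-\tilde i\tilde p$ on $\bar{\mathcal L}$: the two terms involving insertions of $d$ into the arguments of $f$ carry identical Koszul signs and cancel, leaving only $(dh+hd)\circ f$. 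Hence every $\partial''$-cocycle is cohomologous to $ip\circ f$, which takes values in $W=\ker d\cap V^{\bullet,0}$, and a homomorphism of positive resolution degree with values in resolution degree zero must vanish because all of $V$ sits in resolution degrees $\ge 0$. This is shorter than your route and, notably, never uses acyclicity of $V^{\otimes n}$ --- only the acyclicity of the target copy of $V$ enters.

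The genuine flaw in your write-up is the identification $\mathrm{Hom}(V^{\otimes n},V)\cong (V^\ast)^{\otimes n}\otimes V$: this is an isomorphism only for finite-dimensional $V$, whereas the spaces the paper actually deforms (polynomial algebras, formal power series in the $p$'s) are infinite-dimensional, so the K\"unneth computation for $(V^\ast)^{\otimes n}\otimes V$ computes the cohomology of the wrong complex. The repair is standard and preserves your architecture: over a field every complex splits as its cohomology plus a contractible summand, whence $H^q\bigl(\mathrm{Hom}(A,B)\bigr)\cong\prod_r\mathrm{Hom}\bigl(H^r(A),H^{r+q}(B)\bigr)$ for arbitrary complexes of vector spaces (pre- and postcomposition with the contracting homotopies of the acyclic summands kill the other three pieces); combined with the ordinary K\"unneth isomorphism $H(V^{\otimes n})\cong W^{\otimes n}$, concentrated in resolution degree zero, this gives the vanishing for $q>0$. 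Your treatment of the product over arities and of the $k[[t]]$-coefficients is fine, and the truncation to $q\ge 0$ does not disturb the argument in positive degrees. One smaller inaccuracy: precisely because of that truncation, $H^0_{\partial''}(\bar{\mathcal L})$ is $\ker\partial''\cap\bar{\mathcal L}^{\bullet,0}$ with no quotient taken, which can be strictly larger than $\mathrm{Hom}(T(W),W)\hat\otimes\, tk[[t]]$; this does not affect the proposition, but the byproduct you announce at the end is not the identification the paper goes on to use.
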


\begin{proof}
To compute the groups $H^q_{\partial''}(\bar{\mathcal{L}})$, we split the complex $(V,d)$ by introducing a contracting homotopy $h: V^{p,q}\rightarrow V^{p,q-1}$ together with the inclusion  $i:W \rightarrow V$ and the projection $p: V\rightarrow W$ mappings associated to the subspace $W=H^0(V,d)\subset V$.  Without loss in generality we may assume that 
$$
hd+dh=1_V-ip\,, \qquad h^2=0\,,\qquad pi=1_W\,,\qquad ph=0\,,\qquad  hi=0\,.
$$
Then the operator $h$ induces a contracting homotopy on $\tilde {h}: \bar{\mathcal{L}}^{\bullet,q}\rightarrow \bar{\mathcal{L}}^{\bullet,q-1}$ defined by 
\begin{equation}\label{hat-h}
(\tilde{h} f)(v_1\otimes\cdots\otimes v_n)=hf(v_1\otimes\cdots\otimes v_n)\,.
\end{equation}
It is clear that
$$
\tilde{h}\partial+\partial \tilde{h}=1_{\bar{\mathcal{L}}}-\tilde{i}\tilde{p}\,,\qquad \tilde{h}^2=0\,,
$$
where the operators $\tilde{i}$ and $\tilde{p}$ are defined similar to (\ref{hat-h}). Since $\ker(1-ip)=W$,
this means that any nontrivial cocycle (homomorphisms) of $(\bar{\mathcal{L}},\partial'')$ is cohomologous to one taking values in the subspace $W$. On the other hand, any homomorphism of $\bar{\mathcal{L}}$ with values in $W$ has resolution degree zero and is automatically a nontrivial $\partial''$-cocycle. Thus,
\begin{equation}\label{prop}
H^\bullet_{\partial''}(\bar{\mathcal{L}})\simeq  H^0_{\partial''}(\bar{\mathcal{L}})= \ker \partial'' \cap \bar{\mathcal{L}}^{\bullet,0} \,.
\end{equation}
\end{proof}
We see that all the nonzero groups $E_1^{p,q}$,  and hence $E^{p,q}_2$, are nested on the base ($q=0$). As a result the spectral sequence collapses at the second term  giving  the isomorphism 
\begin{equation}\label{H-E}
H^n(\bar{\mathcal{L}}, \partial)\simeq E_2^{n,0}=H^n_{\partial'} H^0_{\partial''}(\bar{\mathcal{L}})\,. 
\end{equation}

Rel. (\ref{prop}) identifies the group $H^0_{\partial''}(\bar{\mathcal{L}})$ with the subspace $\bar{\mathcal{N}}=\ker \partial'' \cap \bar{\mathcal{L}}^{\bullet,0}\subset \bar{\mathcal{L}}$.  Actually, $\bar{\mathcal{N}}$ is not just a subspace but a differential subalgebra of $(\bar{\mathcal{L}}, \partial)$ as one can easily see. Notice also that $m'\in \mathcal{N}=\ker \partial\cap \mathcal{L}^{\bullet,0}$ and $\partial|_{\bar{\mathcal{N}}}=\partial'$. This allows us to interpret Rel. (\ref{H-E}) as an isomorphism of the cohomology groups:
$$
H^\bullet (\bar{\mathcal{L}}, \partial)\simeq H^{\bullet}(\bar{\mathcal{N}},\partial')\,.
$$
In other words, the natural inclusion $\alpha:\bar{\mathcal{N}}\rightarrow \bar{\mathcal{L}}$ is a quasi-isomorphism of DGLAs. According to Theorem \ref{Th1}, this implies an isomorphism of the formal moduli spaces $$\mathcal{M}(\bar{\mathcal{L}}, \partial)\simeq \mathcal{M}(\bar{\mathcal{N}},\partial')\,.$$ 

It is well known that each quasi-isomorphism of two DGLAs has a {\it quasi-inverse} homomorphism, see e.g. \cite[Sec. 4.1]{DHR}. 
Therefore, there exists a homomorphism 
\begin{equation}\label{be}
\beta :\bar{\mathcal{L}}\rightarrow \bar{\mathcal{N}}
\end{equation}
such that the induced homomorphism $H(\beta\alpha)=H(\beta)H(\alpha)$ defines the identical mapping on $H^\bullet (\bar{\mathcal{N}},\partial')$. With the help of $\beta$ we can transfer the MC elements backwards: If $\gamma \in \mathcal{MC}(\bar{\mathcal{L}}, \partial)$, then $\mu=\beta(\gamma)\in \mathcal{MC}(\bar{\mathcal{N}},\partial')$. 

It remains to note that due to the condition
$$
[d, \mu]=0
$$
each MC element $\mu\in \mathcal{MC}(\bar{\mathcal{N}},\partial')$ admits a consistent restriction to the subspace $W\subset V$, in the sense that $T(W)\subset T(V)$ and $\mu: T(W)\rightarrow W$. Let us denote this restriction by ${\mu}|_W$.  Combining the quasi-isomorphism (\ref{be}) with the restriction map, we can deform the original $A_\infty$-structure $m^{}_W=m'|^{}_{W}$ on $W$ by the formula
\begin{equation}\label{MF}
m_W^t=m^{}_W+{\beta(\gamma)}|^{}_W \qquad \forall \gamma \in \mathcal{MC}(\bar{\mathcal{L}},\partial)\,.
\end{equation}
The problem now is to find out an explicit formula for the quasi-isomorphism (\ref{be}). This will be discussed in the next two sections.

\section{Transferring $A_\infty$-structures}

As we have seen any deformation of the $A_\infty$-algebra $(V, m)$ induces a deformation of the algebra $(V, m')$ and -- through the restriction -- a deformation of $(W, m_W)$. Although the deformation problems for the $A_\infty$-structures $m$ and $m'$ are essentially equivalent, the former provides more freedom for constructing formal deformations as we are not restricted to the zero resolution degree.  

In this paper, we focus  upon a special class of deformations of  $(V,m)$ that are represented by straight lines in the MC space $\mathcal{MC}(\bar{\mathcal{L}}, \partial)$. Any such deformation is defined by an $A_\infty$-structure  $\lambda$
 which is compatible with $m$, i.e.,
\begin{equation}\label{comp-con}
[\lambda,\lambda]=0\,,\qquad [m,\lambda]=0\,.
\end{equation}
This ensures that the formal line
\begin{equation}\label{m_t}
m^t=m+t\lambda 
\end{equation}
defines a family of $A_\infty$-structures on $V$. Formula (\ref{MF}) yields then a formal deformation of the $A_\infty$-structure on $W$:
\begin{equation}\label{m_W}
m_W^t=m^{}_W+\beta(t\lambda)|^{}_W\,.
\end{equation}
In general, the resulting $A_\infty$-structure $m_W^t$ may contain higher orders in $t$, defining  a formal curve rather than a line in the MC space.

A simple observation concerning the linear deformations (\ref{m_t}) is that we can always satisfy the quadratic relation  (\ref{comp-con}) by choosing  $\lambda \in \mathrm{Hom}(T^0(V), V)$. Having no arguments, the `homomorphism' $\lambda$ automatically satisfies the first equation in (\ref{comp-con}) and we are left with the only linear condition. The latter can easily be analyzed  in many practical cases. For example, let $A=(V, m')$ be a graded associative algebra with $m'=m'_2$ and let $\lambda\in V$. Then the second equation in  (\ref{comp-con}) tells  us that 
$$
d\lambda=0\,, \qquad [m_2',\lambda]=0\,.
$$
In other words,  $\lambda$ is just a $d$-cocycle belonging to the center of the associative algebra $A$.  
By construction, any such cocycle gives rise to a deformation of the associative algebra $A_W=(W, m_W)$ in the category of $A_\infty$-algebras. It is worth noting that the deformed $A_\infty$-structure (\ref{m_W})  may well be flat, while its preimage (\ref{m_t}) is not.

In order to construct the desired MC element $\mu=\beta(t\lambda)\in \bar{\mathcal{L}}^{1,0}$ we follow the method of our recent work  \cite{SkSh}. Namely, we introduce the operators 
$$
\partial_\mu=\partial +[\mu, - ]\,,\qquad N=t\frac{d}{dt}
$$
together with the pair of auxiliary elements $\Gamma\in \bar{\mathcal{L}}^0$ and $\Lambda\in \mathcal{L}^1$ of total degrees $0$ and $1$. The unknowns  $\Gamma$, $\Lambda$ and $\mu$ are supposed to satisfy the following set of `master equations':
\begin{equation}\label{MEq}
     \partial_\mu\Gamma=t\Lambda-N\mu\,,\qquad N\Lambda=[\Gamma,\Lambda]\,,\qquad \pi(\Lambda)=\lambda\,.
\end{equation}
The name and the relevance of these equations to our problem are explained by the next statement. 

\begin{lemma}
The element $\mu\in \bar{\mathcal{L}}^{1,0}$  defined by Eqs. (\ref{MEq}) satisfies the MC equation 
$$
\partial \mu=-\frac12[\mu,\mu]
$$
whenever 
$$
\partial \lambda =0\,,\qquad [\lambda,\lambda]=0\,.
$$
\end{lemma}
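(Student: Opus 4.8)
The plan is to show that the ``curvature'' $R:=\partial\mu+\tfrac12[\mu,\mu]$ of $m+\mu$ satisfies a homogeneous first-order equation in $t$ with vanishing initial data, which forces $R=0$. Everything rests on an elementary vanishing principle: if $x\in\bar{\mathcal{L}}$ obeys $Nx=[\Gamma,x]$ with $\Gamma\in\bar{\mathcal{L}}^0$, then $x=0$. Indeed, writing $x=\sum_{n\ge 1}x_nt^n$ and $\Gamma=\sum_{k\ge 1}\Gamma_kt^k$, the coefficient of $t^n$ in $Nx$ is $nx_n$ whereas in $[\Gamma,x]$ it involves only the $x_m$ with $m<n$; since $\operatorname{char}k=0$, induction on $n$ kills every $x_n$.

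First I would assemble the formal identities. Since $\partial_\mu=[m+\mu,-]$ is an odd derivation of the Gerstenhaber bracket and $[m+\mu,m+\mu]=2R$ (because $[m,m]=0$), the graded Jacobi identity gives $\partial_\mu^2=[R,-]$; moreover $N$ is an even derivation of the bracket, commutes with $\partial=[m,-]$, and satisfies $[N,\partial_\mu]=[N\mu,-]$. Differentiating $R$ and using the first master equation $N\mu=t\Lambda-\partial_\mu\Gamma$ yields
\[
NR=\partial_\mu(N\mu)=t\,\partial_\mu\Lambda-\partial_\mu^2\Gamma=t\,\partial_\mu\Lambda-[R,\Gamma]\,,
\]
so that $NR=[\Gamma,R]$ once we know $\partial_\mu\Lambda=0$; as $\pi(\mu)=0$ implies $\pi(R)=0$, the vanishing principle then gives $R=0$, i.e.\ the MC equation.

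It therefore remains to prove $\partial_\mu\Lambda=0$, which I would do by applying the same scheme twice more, from the bottom up. Set $S:=\tfrac12[\Lambda,\Lambda]$. Applying $N$ to the second master equation $N\Lambda=[\Gamma,\Lambda]$ and using Jacobi gives $NS=[\Gamma,S]$; the hypothesis $[\lambda,\lambda]=0$ says precisely that $\pi(S)=0$, so $S=0$, i.e.\ $[\Lambda,\Lambda]=0$. Next, applying $\partial_\mu$ to $N\Lambda=[\Gamma,\Lambda]$, commuting $\partial_\mu$ past $N$, and substituting the first master equation produces
\[
N(\partial_\mu\Lambda)-[\Gamma,\partial_\mu\Lambda]=t\,[\Lambda,\Lambda]=0\,,
\]
while the hypothesis $\partial\lambda=0$ gives $\pi(\partial_\mu\Lambda)=[m,\lambda]=\partial\lambda=0$; the vanishing principle then yields $\partial_\mu\Lambda=0$, and substituting this back into the displayed identity for $NR$ completes the proof.

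The computations themselves are routine; the care needed is to carry the Koszul signs correctly when pushing the odd operator $\partial_\mu$ and the even operator $N$ through brackets, and to respect the order of the three vanishing arguments — one must eliminate $[\Lambda,\Lambda]$ first, then $\partial_\mu\Lambda$, then $R$ — since at each stage the ``source term'' on the right-hand side is exactly the quantity killed at the previous stage. This nesting, rather than any single computation, is the real content of the argument.
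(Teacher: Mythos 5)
Your proof is correct and follows essentially the same route as the paper: you introduce the same quantities $R$, $\partial_\mu\Lambda$ and $[\Lambda,\Lambda]$, derive the same triangular system of identities from the master equations, and conclude from the vanishing of their images under $\pi$. The only difference is cosmetic — the paper packages the final step as uniqueness for a system of linear ODEs regular in $t$, whereas you solve the three equations sequentially by induction on the coefficients of the formal power series, which is the same argument made explicit.
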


\begin{proof}
Let us denote 
$$
R=\partial\mu+\frac12[\mu,\mu]\,,\qquad T=\partial_\mu \Lambda\,,\qquad S=[\Lambda,\Lambda]\,.
$$
Applying the operator $\partial_\mu$ to both sides of the master equations (\ref{MEq}), we find  
\begin{equation}\label{RT}
[R,\Gamma]=tT -NR \,,\qquad NT=[\Gamma, T]+tS\,, 
\end{equation}
provided that $\Lambda$, $\Gamma$, and $\mu$ obey (\ref{MEq}). Acting by $N$ on $S$ and using the master equations once again, we get one more relation 
\begin{equation}\label{S}
NS=[\Gamma,S]\,.
\end{equation}
Taken together, Eqs. (\ref{RT}) and (\ref{S}) constitute a closed system of linear ODEs
$$
\dot{R}=T-\frac1t[R,\Gamma]\,,\qquad \dot{T}=S+\frac1t[\Gamma,T]\,,\qquad \dot S=\frac1t[\Gamma,S]\,.
$$
Since $\Gamma \in \bar{\mathcal{L}}$, the right hand sides of these equations are regular in $t$. Therefore, the equations have a unique solution $R=0$, $T=0$, and $S=0$ subject to the  initial conditions
 $$
 R(0)=0\,,\qquad T(0)=\partial\lambda=0\,,\qquad S(0)=[\lambda,\lambda]=0\,.
 $$
\end{proof}

It remains to show that the master equations (\ref{MEq}) do have a solution. 

\begin{lemma}\label{lem2}
    Eqs. (\ref{MEq}) have a unique solution satisfying the additional conditions 
    \begin{equation}\label{bc}
      \tilde{h}\Gamma=0\,,\qquad  \tilde{p}\Gamma=0\,,
    \end{equation}
where the operators $\tilde{h}$ and $\tilde{p}$ are defined by Eq. (\ref{hat-h}).
\end{lemma}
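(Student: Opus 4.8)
The plan is to solve the master equations (\ref{MEq}) recursively in powers of the deformation parameter~$t$. Write $\mu=\sum_{n\ge1}\mu_n t^n$ and $\Gamma=\sum_{n\ge1}\Gamma_n t^n$ --- neither has a constant term, since both lie in $\bar{\mathcal L}=t\mathcal L$ --- and $\Lambda=\sum_{n\ge0}\Lambda_n t^n$; the third master equation is then just $\Lambda_0=\lambda$ and is empty at positive orders. Two elementary facts make the recursion go through: the Euler operator $N=t\,d/dt$ multiplies the $t^n$-component by $n$, hence is invertible on $\bar{\mathcal L}$; and since every element of $\mathcal L^n=\bigoplus_{p+q=n}\mathcal L^{p,q}$ has bounded resolution degree, the operator $\tilde h\partial'$, which strictly lowers the resolution degree, is nilpotent on all the spaces that occur. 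I will induct on $n$, the hypothesis at stage $n$ being that $\Gamma_{<n}$, $\mu_{<n}$, $\Lambda_{\le n-1}$ are uniquely determined, satisfy (\ref{MEq}) and (\ref{bc}) modulo $t^{n}$, and that each $\mu_j$ with $j<n$ takes values in $W$ (so that $[d,\mu_j]=0$). The base case $n=0$ is trivial, with $\Gamma_0=\mu_0=0$.

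For the inductive step, the first master equation at order~$n$ reads $\partial\Gamma_n+n\mu_n=X_n$, where $\partial=\partial'+\partial''$ and $X_n:=\Lambda_{n-1}-\sum_{j=1}^{n-1}[\mu_j,\Gamma_{n-j}]$ is already known. Applying the homotopy $\tilde h$ of (\ref{hat-h}) and using $\tilde h\mu_n=0$ (since $\mu_n$ is to have resolution degree~$0$), the relation $\tilde h\partial''+\partial''\tilde h=1-\tilde i\tilde p$, and the imposed conditions $\tilde h\Gamma_n=\tilde p\Gamma_n=0$ (which give $\tilde h\partial''\Gamma_n=\Gamma_n$), the equation collapses to $(1+\tilde h\partial')\Gamma_n=\tilde h X_n$. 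By the nilpotency of $\tilde h\partial'$ this has the unique solution $\Gamma_n=(1+\tilde h\partial')^{-1}\tilde h X_n=\tilde h\bigl((1+\partial'\tilde h)^{-1}X_n\bigr)$, a finite expression lying in $\mathrm{im}\,\tilde h$ and therefore automatically satisfying (\ref{bc}) (because $\tilde h^2=0$ and $\tilde p\tilde h=0$). One then puts $\mu_n:=\tfrac1n(X_n-\partial\Gamma_n)$ and, from the second master equation, $\Lambda_n:=\tfrac1n\sum_{j=1}^{n}[\Gamma_j,\Lambda_{n-j}]$; by construction the three master equations now hold modulo $t^{n+1}$. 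Uniqueness is immediate: (\ref{bc}) leaves no freedom in $\Gamma_n$, after which $\mu_n$ and $\Lambda_n$ are forced.

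The only step that is not automatic --- and which I expect to be the heart of the proof --- is to check that the $\mu_n$ just produced genuinely lies in $\bar{\mathcal L}^{1,0}$, i.e. has resolution degree~$0$; the homotopy enforces only the $\tilde h$-projection of the order-$n$ equation, so its resolution-degree-positive part must be verified separately. Since $\tilde h(n\mu_n)=0$ --- a consequence of $(1+\tilde h\partial')\Gamma_n=\tilde h X_n$ together with the homotopy identities --- the Hodge-type decomposition $1=\tilde i\tilde p+\tilde h\partial''+\partial''\tilde h$ reduces the claim to $\tilde h\partial''(n\mu_n)=0$; and a short manipulation with this decomposition, using $\Gamma_n=\tilde h(X_n-\partial'\Gamma_n)$, the identities $\partial''\partial=-\partial'\partial''$ and $(\partial')^2=0$, and the fact that $\tilde h$ annihilates anything of resolution degree~$0$, turns it into $\tilde h\partial''(n\mu_n)=-(\tilde h\partial')\,\tilde h\partial''(n\mu_n)+\tilde h\partial X_n$, whence $\tilde h\partial''(n\mu_n)=(1+\tilde h\partial')^{-1}\tilde h\partial X_n$ by the nilpotency of $\tilde h\partial'$. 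Thus everything comes down to $\tilde h\partial X_n=0$. For $n=1$ this is immediate, since $X_1=\lambda$ and $\partial\lambda=0$; for $n\ge2$ it follows from a longer computation in which $\partial X_n$ is expanded by the graded Leibniz rule and the inductive hypothesis is used --- that the lower-order master equations hold and that $[d,\mu_j]=0$ for $j<n$, together with the standing assumptions $\partial\lambda=0$ and $[\lambda,\lambda]=0$ --- so that the contributions $\tilde h$ does not annihilate cancel out. This last chain of Gerstenhaber-bracket manipulations, signs included, is the laborious but essentially mechanical core of the argument; once it is carried out, $n\mu_n=\tilde i\tilde p(X_n-\partial'\Gamma_n)$ takes values in $W$, the induction closes, and Lemma~\ref{lem2} follows.
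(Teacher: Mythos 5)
Your core mechanism is the same as the paper's --- invert $\partial''$ on $\Gamma$ by means of the contracting homotopy $\tilde h$ under the side conditions (\ref{bc}), then note that $N=t\,d/dt$ is invertible on $\bar{\mathcal L}$ so the rest becomes a solvable recursion --- but you organize the recursion by powers of $t$, whereas the paper first decomposes everything by \emph{resolution degree}. Concretely, the paper expands $\Gamma=\sum\Gamma_n$, $\Lambda=\sum\Lambda_n$ with $\deg\Gamma_n=(-n,n)$, $\deg\Lambda_n=(1-n,n)$, splits the first master equation into its resolution-degree-zero component (\ref{Nm}), which simply \emph{defines} $N\mu$, and its positive-degree components (\ref{dG}), which are solved for $\Gamma$ in closed form (\ref{g-ser}); substituting back turns (\ref{Nm}) and (\ref{Nl}) into the ODE system (\ref{dif-eq}) in $t$, solved uniquely by iteration from $\mu(0)=0$, $\Lambda(0)=\lambda$. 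The payoff of that ordering is exactly the point you flag as ``the heart of the proof'': since $\mu$ is produced only by the resolution-degree-zero equation, it lies in $\bar{\mathcal L}^{1,0}$ by construction and there is nothing to check. Your entire third paragraph is an artifact of grading by $t$ before grading by resolution degree.

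That third paragraph is also where your proposal has a genuine gap. The assertion $\tilde h\partial X_n=0$ for $n\ge2$ is never verified; you state that the surviving terms ``cancel out'' after ``a longer computation,'' but that cancellation \emph{is} the nontrivial content of your version of the argument --- without it you have only shown that the $\tilde h$-projection of the first master equation holds, not the equation itself, and $\mu_n$ could acquire positive-resolution-degree components. Moreover, as you set it up the cancellation already requires $\partial\lambda=0$ at $n=1$ and $[\lambda,\lambda]=0$ at higher orders. These are not hypotheses of Lemma \ref{lem2}: the paper quarantines them in the preceding lemma (where they are used to force $R=T=S=0$), and its proof of the present lemma never manipulates brackets of $\lambda$ at all. (To be fair, an analogous unstated consistency question --- whether (\ref{g-ser}) satisfies the unprojected equations (\ref{dG}) rather than only their $\tilde h$-image --- is left implicit in the paper as well; the decomposition merely relocates it from $\mu$ to $\Gamma$.) To repair your write-up you must either carry out the bracket computation in full, accepting that you then prove a conditional version of the lemma, or adopt the resolution-degree decomposition, under which uniqueness and the membership $\mu\in\bar{\mathcal L}^{1,0}$ come for free.
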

\begin{proof} Let us expand $\Gamma$ and $\Lambda$ in homogeneous components:
\begin{equation}\label{2exp}
\Gamma=\sum_{n=0}^\infty\Gamma_n\,,\qquad \Lambda=\sum_{n=0}^\infty\Lambda_n\,,
\end{equation}
where
$$
\deg \Gamma_n=(-n,n)\,,\qquad \deg \Lambda_n=(1-n,n)\,.
$$
On substitution of these expansions into the master equations  (\ref{MEq}), we obtain the system  of homogeneous equations 
\begin{equation}\label{Nm}
N\mu=-\partial_\mu'\Gamma_0+t\Lambda_0\,, \qquad\qquad\qquad\qquad\quad
\end{equation}
\begin{equation}\label{dG}
  \partial''\Gamma_{n}+\partial'_\mu\Gamma_{n+1}=t\Lambda_{n+1}\,,\quad n=0,1,2,\ldots,
\end{equation}
\begin{equation}\label{Nl}
    N\Lambda_n=\sum_{m=0}^n[\Gamma_{n-m},\Lambda_m]\,,\quad n=0,1,2,\ldots\,.
\end{equation}
Here we introduced the shorthand notation $\partial'_\mu=\partial'+[\mu,-]$.
Applying the contracting homotopy operator (\ref{hat-h}) to  equations (\ref{dG}) and using conditions (\ref{bc}), we can formally solve (\ref{dG}) for $\Gamma$ as  
\begin{equation}\label{g-ser}
\Gamma_n=t\sum_{k=0}^\infty (-\tilde{h}\partial'_\mu)^k\tilde{h}\Lambda_{n+k+1}\,.
\end{equation}
Substituting this expression into the remaining equations  (\ref{Nm}) and (\ref{Nl}), we get the system of  ODEs 
\begin{equation}\label{dif-eq}
\dot \mu =\sum_{m=0}^\infty (-\partial'_\mu\tilde{h})^{m}\Lambda_{m}\,,\qquad \dot \Lambda_n =\sum_{k=0}^n\sum_{m=0}^\infty [ \tilde{h}(-\partial'_\mu\tilde{h})^m\Lambda_{k+m+1},\Lambda_{n-k}]\,,
\end{equation}
where the overdot stands for the derivative in $t$.  These last equations can be solved by iterations giving a unique solution subject to the initial conditions $\mu(0)=0$ and $\Lambda(0)=\lambda$. 
In particular,  if
$$
\lambda=\sum_{m=0}^M \lambda_m
$$
is the  expansion of $\lambda$ with respect to the resolution degree, then the first-order deformation is determined 
by 
\begin{equation}\label{dm}
\dot{\mu}(0)=\sum_{m=0}^M (-\partial'\tilde{h})^{m}\lambda_{m}\,.
\end{equation}
The expression for the second-order deformation is more cumbersome. Differentiating the first equation in (\ref{dif-eq}) and setting $t=0$, we find 
$$
\ddot{\mu}(0)=-\sum_{m=0}^M\sum_{k=1}^{m-1}(-\partial'\tilde{h})^k[{\dot{\mu}(0)}, \tilde{h}(-\partial'\tilde{h})^{m-k-1}\lambda_m]
+\sum_{m=0}^M(-\partial'\tilde{h})^m\dot{\Lambda}_m(0)\,,
$$
where
$$
\dot{\Lambda}_n(0)=\sum_{k=0}^n \sum_{m=0}^{M-k-1} [\tilde{h}(-\partial'\tilde{h})^{m}\lambda_{k+m+1},\lambda_{n-k}]
$$
and $\dot{\mu}(0)$ is given by (\ref{dm}). As is seen all the sums are finite and this property holds true in higher orders. 
\end{proof}

\begin{remark}\label{rem} In the above proof, the convergence of the series (\ref{g-ser}) followed  a posteriori, after solving the differential equations.  In many interesting cases, however, it can be ensured a priori. Suppose, for example,
$$
V=\bigoplus_{\stackrel{-m\leq p\leq 0}{_{q\geq 0}}}V^{p,q}\,,
$$
that is, the first degree of homogeneous vectors is non-positive and bounded below by $-m$. Then, so is the first degree of the associated DGLA: 
$$
\mathcal{L}=\bigoplus_{\stackrel{-m\leq p\leq 0}{_{q\geq 0}}}\mathcal{L}^{p,q}\,.
$$
As a result, the expansions (\ref{2exp}) are finite and the series (\ref{g-ser}) contains only finite number of  terms. 

Notice that the space $\mathcal{L}=\bigoplus \mathcal{L}^{p,q}$ has two more natural gradings in addition to the original bi-grading. 
These are given by the degree in $t$ and by the degree of $f\in \mathcal{L}$ as an element of the graded space $\prod_{m\geq 0}\mathrm{Hom}(T^{m}(V),V)$. More precisely, $\mathcal{L}=\prod_{n,m\geq 0}\mathcal{L}_{n,m}$, where $\mathcal{L}_{n,m}$ is spanned by the elements of the form
$$
a\otimes t^n\,,\qquad a\in \mathrm{Hom}(T^m(V),V)\,.
$$
Summing up these two gradings, we obtain an $\mathbb{N}$-graded space $\mathcal{L}=\prod_{k\geq 0}\mathcal{L}_k$ with $\mathcal{L}_k=\bigoplus_{n+m=k}\mathcal{L}_{n,m}$.  Let us now suppose that 
$$
m'\in \prod_{k\geq 2}\mathrm{Hom}(T^k(V), V)\,.
$$
Then the operator $\partial'_\mu=\partial'+[\mu, - ]$ increases the $\mathbb{N}$-degree and the series (\ref{g-ser}) is well defined  as an element of $\mathcal{L}=\prod_{k\geq 0}\mathcal{L}_k$.  
\end{remark}

\section{Interpretation  via homological perturbation theory}

In the previous section, we have shown how to construct a deformation of an $A_\infty$-algebra $(V,m')$ starting form a suitable resolution $(V,m',d)$ and a compatible $A_\infty$-structure $\lambda$.  Although the  master equations (\ref{MEq}) provide an explicit solution to the deformation problem, their origin remains obscure. In order to clarify our construction,  we will put it in a slightly different approach of {\it homological perturbation theory} (HPT). A detailed account of the theory can be found in \cite{HK}, \cite{GLS}, \cite{Kr} (see also \cite{LZ} for a recent discussion of HPT in the context of formal higher spin gravities). Below we briefly review some basic definitions and statements.

First, we note that the  complex $(V,d)$, being taken together with the contracting homotopy $h$, provides  a particular example of  a {\it strong deformation retract} (SDR).  
In general, a SDR is given by a pair of complexes $(V,d_V)$ and $(W,d_W)$ together with chain maps $p:V\rightarrow W$ and $i:W\rightarrow V$
such that $pi=1_W$ and $ip$ is homotopic to $1_V$. The last property implies the  existence of a map $h: V\rightarrow V$ such that 
$$
dh+hd=ip-1_V\,.
$$
Without loss in generality, one may assume the following {\it annihilation properties}:  
$$
hi=0\,,\qquad ph=0\,,\qquad h^2=0\,.
$$
All these data can be summarized by a single  diagram 
\begin{equation}\label{SDR}
\xymatrix{
*{\hspace{5ex}(V,d_V)\;}\ar@(ul,dl)[]_{h} \ar@<0.5ex>[r]^-p
& (W, d_W) \ar@<0.5ex>[l]^-i}\,.
\end{equation}

The  situation considered in  the previous section corresponds to a special case where  $W=H(V,d_V)$ is the cohomology space of the complex $(V,d_V)$ and $d_W=0$. 

The main concern of HPT is  transferring various algebraic structures form one object to another through a homotopy equivalence. Whenever applicable, the theory provides effective algorithms and explicit formulas as distinct from the most part of classical homological algebra.  The cornerstone of HPT is 
the following statement, often called the Basic Perturbation Lemma. 
\begin{lemma}[\cite{B}]\label{BPL}
Given SDR data (\ref{SDR}) and a small perturbation $\delta$ of  $d_V$ such that $(d_V+\delta)^2=0$ and $1-\delta h$ is invertible, there is a new SDR 
$$
\xymatrix{
*{\hspace{9ex}(V,d_V+\delta )\;}\ar@(ul,dl)[]_{h'} \ar@<0.5ex>[r]^-{p'}
& (W, d_W) \ar@<0.5ex>[l]^-{i'}}\,,
$$
where the maps are given by 
$$
\begin{array}{ll}
     p'=p+p(1-\delta h)^{-1}\delta h\,,&\quad i'=i+h(1-\delta h)^{-1}\delta i\,, \\[3mm]
    h'=h+h(1-\delta h)^{-1}\delta h\,, & \quad d'_W = d_W+p(1-\delta h)^{-1}\delta i\,.
\end{array}
$$
\end{lemma}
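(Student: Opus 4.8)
The plan is to verify directly that the perturbed data $(p',i',h',d_W')$ satisfy all the axioms of a strong deformation retract. The computational backbone is a \emph{push--pull} identity for the operator
\[
A:=(1-\delta h)^{-1}\delta\,,
\]
whose existence is guaranteed by the hypothesis that $1-\delta h$ is invertible. First I would record that $1-h\delta$ is then invertible as well, this being the elementary fact that $1-ab$ is invertible if and only if $1-ba$ is, with $(1-ba)^{-1}=1+b(1-ab)^{-1}a$. Multiplying the trivial relation $\delta(1-h\delta)=(1-\delta h)\delta$ by the two inverses yields $A=\delta(1-h\delta)^{-1}$, whence the recursions $A=\delta+\delta hA=\delta+Ah\delta$ and the expansions $(1-\delta h)^{-1}=1+Ah$, $(1-h\delta)^{-1}=1+hA$. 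Using these, each perturbed map takes a compact form: $p'=p+pAh=p(1-\delta h)^{-1}$, $i'=i+hAi=(1-h\delta)^{-1}i$, $h'=h+hAh=h(1-\delta h)^{-1}=(1-h\delta)^{-1}h$, and $d_W'=d_W+pAi$. These are the forms in which the subsequent checks are cleanest.

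Next I would dispose of the purely ``formal'' identities $p'i'=1_W$, $h'i'=0$, $p'h'=0$, and $(h')^2=0$. Expanding each composite in terms of $A$ and substituting the annihilation relations $hi=0$, $ph=0$, $h^2=0$ (together with $pi=1_W$ for the first), every cross term vanishes and the identities fall out without invoking the differentials at all; this is the warm-up.

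The heart of the matter is the new contraction relation
\[
(d_V+\delta)\,h'+h'\,(d_V+\delta)=i'p'-1_V\,.
\]
To prove it I would expand both sides using $h'=h+hAh$ and $p'=p+pAh$, then feed in the three structural inputs: the old homotopy relation $d_Vh+hd_V=ip-1_V$, the identity $d_V\delta+\delta d_V=-\delta^2$ coming from $(d_V+\delta)^2=0$ and $d_V^2=0$, and the recursions for $A$, repeatedly trading $\delta hA$ and $Ah\delta$ for $A-\delta$ so that the bookkeeping telescopes. I expect this to be the main obstacle: it is the one place where the precise shape of $A$, the perturbation equation, and all of the annihilation conditions have to be used simultaneously, and where it is easiest to drop a term or a sign.

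Once the contraction relation is secured, the remaining axioms are of the same nature but shorter. The chain-map conditions $p'(d_V+\delta)=d_W'p'$ and $(d_V+\delta)i'=i'd_W'$, and then $(d_W')^2=0$, follow by analogous manipulations that additionally use the unperturbed chain-map relations $pd_V=d_Wp$ and $d_Vi=id_W$; alternatively, several of them can be read off by composing the contraction relation with $p'$ or $i'$ and invoking the formal identities of the previous step. I would close with the remark that invertibility of $1-\delta h$ is exactly what makes $A$ well defined, and that in the situations of interest the series $(1-\delta h)^{-1}=\sum_{n\ge0}(\delta h)^n$ converges --- e.g.\ because $\delta$ respects a complete filtration, cf.\ Remark~\ref{rem} --- so that the lemma applies with no extra hypotheses.
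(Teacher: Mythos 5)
The paper offers no proof of this lemma: it is quoted from Brown \cite{B}, with detailed modern treatments in \cite{GLS} and \cite{Kr}, so there is no in-paper argument to compare against. Your proposal reproduces the standard direct verification from those sources and is correct in outline. The reductions you set up are the right ones: with $A=(1-\delta h)^{-1}\delta=\delta(1-h\delta)^{-1}$ one indeed has $p'=p(1-\delta h)^{-1}$, $i'=(1-h\delta)^{-1}i$, $h'=h(1-\delta h)^{-1}=(1-h\delta)^{-1}h$, and the side conditions $p'i'=1_W$, $p'h'=0$, $h'i'=0$, $(h')^2=0$ fall out of the annihilation properties alone. For the record, the central step you defer does close up in exactly the way you predict: after expanding $(d_V+\delta)h'+h'(d_V+\delta)-i'p'+1_V$ and trading $\delta hA$ and $Ah\delta$ for $A-\delta$, everything collapses to the single operator identity $d_VA+Ad_V+A\,ip\,A=0$; this follows by commuting $d_V$ through the recursion $A=\delta+\delta hA$ using $d_V\delta+\delta d_V=-\delta^2$ and $d_Vh=ip-1_V-hd_V$, collecting the terms $\delta^2+\delta^2hA=\delta A$, and solving the resulting equation $(1-\delta h)d_VA=-\delta d_V-\delta\,ip\,A$ for $d_VA$. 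The remaining axioms then follow as you say, either by the analogous manipulations or by composing the homotopy identity with $p'$ and $i'$. One small point worth making explicit if you write this up: the homotopy relation here is normalized as $d_Vh+hd_V=ip-1_V$, opposite in sign to the convention $hd+dh=1_V-ip$ used in the proof of Proposition \ref{prop1}; your target identity is stated consistently with the former, but a reader moving between the two sections should be warned.
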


One can think of the operator $A=(1-\delta h)^{-1}$ as being defined by a geometric series 
$$
A=\sum_{n=0}^\infty (\delta h)^n\,. 
$$
In many practical cases its convergence is ensured by the existence of a natural decreasing filtration of $V$ which is lowered by the operator  $\delta h$. 

We are concerned with transferring  $A_\infty$-structures on $V$ to its cohomology space $W$. 
To put this transference problem into the framework of HPT one first applies the tensor-space functor $T$  to the vector spaces $V$ and $W$. Recall that, in addition to the associative algebra structure, the space $T(V)$ carries the structure of a coassociative  coalgebra with respect to the Alexander--Whitney coproduct 
$$
\Delta: T(V)\rightarrow T(V)\otimes T(V)\,,
$$

$$
\Delta (v_1\otimes \cdots\otimes v_n)=1\otimes (v_1\otimes\cdots \otimes v_n)+\sum_{i=1}^{n-1}(v_1\otimes\cdots\otimes v_i)\otimes (v_{i+1}\otimes\cdots\otimes v_n)$$
$$
+(v_1\otimes\cdots \otimes v_n)\otimes 1\,.
$$
Coassociativity is expressed by the relation $(1\otimes \Delta)\Delta=(\Delta\otimes 1)\Delta$.

A linear map $D: T(V)\rightarrow T(V)$ is called a {\it coderivation}, if it obeys the co-Leibniz rule $$\Delta D=(D\otimes 1+1\otimes D)\Delta\,.$$ 
The space of coderivations is known to be  isomorphic to the space of homomorphisms $\mathrm{Hom}(T(V),V)$, so that  any homomorphism $f: T(V)\rightarrow V$ induces a coderivation $\hat{f}: T(V)\rightarrow T(V)$ and vice versa:  if $f\in \mathrm{Hom}(T^m(V),V)$, then  
\begin{equation}\label{f1}
\begin{array}{rl}
\hat{f}(v_1\otimes\cdots\otimes v_n)=\displaystyle \sum_{i=1}^{n-m+1}&(-1)^{|f|(|v_1|+\cdots+ |v_{i-1}|)}v_1\otimes\cdots\otimes v_{i-1}\\[4mm]
&\otimes f(v_i\otimes \cdots \otimes v_{i+m-1})\otimes v_{i+m}\otimes \cdots\otimes v_n
\end{array}
\end{equation}
for $n\geq m$ and zero otherwise. Among other things, this allows one to interpret the Gerstenhaber bracket (\ref{GB}) as the commutator of two coderivations.  For $f\in \mathrm{Hom}(V,V)$ the above relation reduces to the usual Leibniz rule for the tensor product.

The next statement, called the {\it tensor trick}, allows one to transfer SDR data from spaces to their tensor (co)algebras. 
\begin{lemma}[\cite{GL}]\label{TT}
With any SDR data (\ref{SDR}) we can associate a new SDR 
$$
\xymatrix{
*{\hspace{8ex}(T(V),\hat{d}_V )\;}\ar@(ul,dl)[]_-{\hat{h}} \ar@<0.5ex>[r]^-{\hat{p}}
& (T(W), \hat{d}_W) \ar@<0.5ex>[l]^-{\hat{i}}}\,,
$$ 
where the new differentials $\hat{d}_V$ and $\hat{d}_W$ are defined by the rule (\ref{f1}),
$$
\hat{p}=\sum_{n=1}^\infty p^{\otimes n}\,,\qquad \hat{i}= \sum_{n=1}^\infty i^{\otimes n}\,,
$$
and the new homotopy is given by
$$
\hat{h}=\sum_{n\geq 1}^\infty \sum_{i=0}^\infty 1^{\otimes i}\otimes h\otimes (ip)^{\otimes n-i-1}\,.
$$
\end{lemma}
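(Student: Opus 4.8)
The plan is to verify, one tensor-degree at a time, that the data $(\hat d_V,\hat d_W,\hat p,\hat i,\hat h)$ satisfy all the defining relations of a strong deformation retract. The starting observation is that each of these five operators respects the length decomposition $T(V)=\bigoplus_{n\ge 0}V^{\otimes n}$, and similarly for $T(W)$: the coderivations $\hat d_V,\hat d_W$ induced from the arity-one maps $d_V,d_W$ act on $V^{\otimes n}$ by the ordinary tensor-product differential $D_n=\sum_{j=1}^{n}1^{\otimes j-1}\otimes d_V\otimes 1^{\otimes n-j}$ with the Koszul signs prescribed by (\ref{f1}), while $\hat p|_{V^{\otimes n}}=p^{\otimes n}$, $\hat i|_{W^{\otimes n}}=i^{\otimes n}$, and $\hat h|_{V^{\otimes n}}=\sum_{i=0}^{n-1}1^{\otimes i}\otimes h\otimes(ip)^{\otimes n-i-1}$. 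So all the checks take place on the individual summands $V^{\otimes n}$, where every sum occurring is finite and no convergence question arises; the degree-zero piece $T^0=k$ carries no information and may be ignored (equivalently, one works on the reduced tensor coalgebra $\bigoplus_{n\ge 1}V^{\otimes n}$). The relations to establish are then: $\hat d_V^2=0=\hat d_W^2$; the chain-map identities $\hat p\hat d_V=\hat d_W\hat p$ and $\hat d_V\hat i=\hat i\hat d_W$; the splitting $\hat p\hat i=1_{T(W)}$; the annihilation properties $\hat h\hat i=0$, $\hat p\hat h=0$, $\hat h^2=0$; and the homotopy relation $\hat d_V\hat h+\hat h\hat d_V=\hat i\hat p-1_{T(V)}$.

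All of these are immediate except the last. That $\hat d_V^2=0$ follows from $d_V^2=0$ and the fact (recalled above) that the Gerstenhaber bracket is the commutator of coderivations, so $\hat d_V^2=\frac12[\hat d_V,\hat d_V]=\frac12\widehat{[d_V,d_V]}=0$; likewise $\hat d_W^2=0$. The chain-map identities hold termwise: conjugating the $j$-th slot $1^{\otimes j-1}\otimes d_V\otimes 1^{\otimes n-j}$ by $p^{\otimes n}$ produces $1^{\otimes j-1}\otimes (pd_V)\otimes 1^{\otimes n-j}=1^{\otimes j-1}\otimes (d_Wp)\otimes 1^{\otimes n-j}$ with the same sign on both sides, and one sums over $j$; the case of $\hat i$ is the mirror image. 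The splitting is $\hat p\hat i=\sum_n(pi)^{\otimes n}=\sum_n 1_{W^{\otimes n}}=1_{T(W)}$, the mixed products $p^{\otimes n}i^{\otimes m}$ with $n\ne m$ vanishing for length reasons. For the annihilation properties, note that each summand of $\hat h$ contains a single factor $h$: composing with $\hat i$ on the right puts $hi=0$ into that slot, composing with $\hat p$ on the left puts $ph=0$ into it, and in any composite of two summands of $\hat h$ some slot sees $h^2$, $h\circ(ip)=(hi)p$, or $(ip)\circ h=i(ph)$ --- each zero.

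For the homotopy relation, i.e., $D_n\hat h_n+\hat h_n D_n=\phi^{\otimes n}-1^{\otimes n}$ on $V^{\otimes n}$ with $\phi:=ip$, I would argue by induction on $n$, invoking the classical fact that a tensor product of two strong deformation retracts is again one. Explicitly, from SDR data on $A$ (homotopy $h_A$) and on $B$ (homotopy $h_B$) one obtains an SDR on $A\otimes B$ with structure maps $p_A\otimes p_B$ and $i_A\otimes i_B$, differential $d_A\otimes 1+1\otimes d_B$, and homotopy $H=1_A\otimes h_B+h_A\otimes(i_Bp_B)$. Verifying that $H$ works is a short Koszul-sign computation: the terms $d_A\otimes h_B$ from $DH$ and $HD$ cancel (commuting $h_B$ past $d_A$ costs $-1$), the terms carrying $h_A$ in the left factor cancel because $i_Bp_B$ commutes with $d_B$ (both $p_B,i_B$ being chain maps), and the survivors combine via $d_Ah_A+h_Ad_A=i_Ap_A-1_A$ and $d_Bh_B+h_Bd_B=i_Bp_B-1_B$ into $(i_Ap_A-1_A)\otimes(i_Bp_B)+1_A\otimes(i_Bp_B-1_B)=(i_Ap_A)\otimes(i_Bp_B)-1_A\otimes 1_B$; the annihilation properties of $H$ follow from those of $h_A,h_B$. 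Taking $A=V^{\otimes(n-1)}$ with the SDR produced at the previous step (homotopy $\hat h_{n-1}=\sum_{i=0}^{n-2}1^{\otimes i}\otimes h\otimes\phi^{\otimes n-2-i}$) and $B=V$ with homotopy $h$ yields on $V^{\otimes n}$ the SDR with maps $p^{\otimes n},i^{\otimes n}$, differential $D_n$, and homotopy $1_{V^{\otimes(n-1)}}\otimes h+\hat h_{n-1}\otimes\phi$, which after reindexing is exactly $\sum_{i=0}^{n-1}1^{\otimes i}\otimes h\otimes\phi^{\otimes n-i-1}=\hat h_n$; the base case $n=1$ is the given SDR. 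Assembling over all $n$ produces the SDR of the statement. (Alternatively, $D_n\hat h_n+\hat h_n D_n=\phi^{\otimes n}-1^{\otimes n}$ can be checked directly: only the summand in which $d_V$ lands in the slot occupied by $h$ contributes, giving $1^{\otimes i}\otimes(d_Vh+hd_V)\otimes\phi^{\otimes n-i-1}=1^{\otimes i}\otimes\phi^{\otimes n-i}-1^{\otimes i+1}\otimes\phi^{\otimes n-i-1}$, all other slot interactions cancelling because $[d_V,\phi]=0$ and by the Koszul signs; the sum over $i$ telescopes.)

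The one genuine obstacle is the sign bookkeeping in this last step --- arranging the Koszul signs so that the cross terms cancel and the sum telescopes. This is precisely the content of the \emph{tensor trick} of \cite{GL}; as usual in homological perturbation theory, the algebra underneath is elementary, but the signs have to be handled with care.
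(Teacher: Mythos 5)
Your proof is correct. The paper does not actually prove this lemma --- it is imported verbatim from Gugenheim--Lambe \cite{GL} --- so there is no in-text argument to compare against; your componentwise verification is precisely the standard one, with the only nontrivial point being the side relation $\hat d_V\hat h+\hat h\hat d_V=\hat i\hat p-1$, which you handle correctly both ways: the cross terms ($d$ in a slot other than the one carrying $h$) cancel between $\hat d_V\hat h$ and $\hat h\hat d_V$ because $ip$ commutes with $d_V$ and because $|d||h|=-1$ flips exactly one of the two Koszul signs, and the diagonal terms telescope via $dh+hd=ip-1_V$ (equivalently, your induction through the tensor product of SDRs with homotopy $1_A\otimes h_B+h_A\otimes i_Bp_B$). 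The only cosmetic point: the inner sum in the displayed formula for $\hat h$ should run over $0\le i\le n-1$ rather than to $\infty$, which you have silently and correctly assumed.
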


After reminding the basics of HPT let us return to our deformation problem. Given a resolution $(V, m', d)$, we can define  the SDR associated to the complex $(V,d)$ and its cohomology space $(W,0)$; the mappings $p$, $i$ and $h$ are defined as in the proof of Proposition \ref{prop1}. Applying the tensor trick yields then an SDR for the corresponding tensor (co)algebras
$$
\xymatrix{
*{\hspace{7ex}(T(V),\hat{d})\;}\ar@(ul,dl)[]_-{\hat{h}} \ar@<0.5ex>[r]^-{\hat{p}}
& (T(W), 0 ) \ar@<0.5ex>[l]^-{\hat{i}}}\,,
$$
The deformed $A_\infty$-structure $m^t=m'+t\lambda$ on $V$ gives rise to the coderivation $\hat{m}^t=\hat{m}'+t\hat{\lambda}$ that squares to zero and commutes with $\hat{d}$. This allows us to treat $\hat{m}^t$ as a small perturbation of $\hat{d}$ and, by making use of the Basic Perturbation Lemma, we  arrive at the SDR 
$$
\xymatrix{
*{\hspace{13ex}(T(V),\hat{d}+\hat{m}^t)\;}\ar@(ul,dl)[]_-{\hat{h}'} \ar@<0.5ex>[r]^-{\hat{p}'}
& (T(W), \hat{m}^t_W ) \ar@<0.5ex>[l]^-{\hat{i}'}}\,.
$$
By Lemma \ref{BPL}, the differential on the right is given by 
\begin{equation}\label{hatmt}
  \hat{m}^t_W = \hat{p}(1-\hat{m}^t \hat{h})^{-1}\hat{m}^t\hat{i}\,.
\end{equation}
A suitable decreasing filtration ensuring the invertibility of  the operator $(1-\hat{m}^t \hat{h})$ comes from the total grading that combines the first degree of $V$ with the degree in $t$, see Remark \ref{rem}. We can simplify Rel. (\ref{hatmt}) by noting that $\mathrm{Im} \, \hat{m}'\hat{i}\subset W$, and hence $\hat{h}\hat{m}'\hat{i}=0$. Then the expansion of (\ref{hatmt}) in powers of $t$ takes the form 
$$
\hat{m}^t_W = \hat{m}^{}_W+ t \hat{p}(1-\hat{m}' \hat{h})^{-1}\hat{\lambda}\hat{i}+O(t^2)\,,
$$
 where $\hat{m}^{}_W=\hat{p}\hat{m}'\hat{i}$.
 In the special case that $\lambda$ is homogeneous of bi-degree $(1-r, r)$, the expression for the first-order correction can  further be simplified. Since $\hat{p}$ annihilates the elements of nonzero resolution degree, we can write 
 $$
 \hat{m}^t_W= \hat{m}^{}_W+ t \hat{p}(\hat{m}'\hat{h})^r\hat{\lambda}\hat{i}+O(t^2)\,,
 $$
 cf. (\ref{dm}). Finally, `removing the hats' of (\ref{hatmt}), we obtain the deformed $A_\infty$-structure $m^t_W$. 

\section{Examples of deformations}

In this section, we illustrate the above machinery of deformations by applying it to some bimodules over polynomial and Weyl algebras and to quantum polynomial superalgebras. Our interest to this class of examples is not purely algebraic. As indicated in Example \ref{example} below, these algebras and their deformations are of primary importance for higher spin theory.

\subsection{Minimal deformations of bimodules} Let us start with some general remarks. Given an associative algebra $A$ and  an $A$-bimodule $M$, one can define a new associative algebra $\mathcal{A}$, called the trivial extension of $A$ by the bimodule $M$. As a vector space $\mathcal{A}=A\oplus M$ and multiplication is defined by the formula  
$$
(a,m)(a',m)=(aa', am+ma')\qquad \forall a,a'\in A\,,\quad \forall m,m'\in M\,.
$$
If no extra structure is assumed, one may only deform the pair $(A,M)$ in the category of bimodules over associative algebras. This is the concern of classical deformation theory.  Notice, however, that the algebra $\mathcal{A}$ admits a natural grading. This is obtained by prescribing the spaces $A$ and $M$ the degrees $0$ and $1$, respectively.  When treated as a graded associative algebra, $\mathcal{A}=\mathcal{A}^0\oplus \mathcal{A}^1$ may have nontrivial deformations in the category of $A_\infty$-algebras. We say that the {\it deformation is minimal} if the resulting $A_\infty$-algebra is minimal.  
As a particular case, this includes the deformation problem for the original bimodule structure.   

In the following, we restrict our consideration  to a rather special yet important class of bimodules that originate from polynomial algebras endowed with automorphisms. 
Let $V$ be an $n$-dimensional vector space over $k$ and let $\vartheta: V\rightarrow V$ be an automorphism of $V$. 
The action of $\vartheta$ on $V$ induces an automorphism of the dual space $V^\ast$, which then extends to an automorphism of the symmetric algebra $A=S(V^\ast)$. Let ${}^\vartheta\! a$ denote the result of the action of $\vartheta$ on $a\in A$.   Given the automorphism $\vartheta$, we can view  the $k$-vector space $A$ as an $A$-bimodule with respect to the following left and right actions:
$$
a\circ m=am,\qquad m\circ a=m^{\vartheta}\!a\qquad \forall a,m\in A\,.
$$
As is seen the right action of $A$ on itself is twisted by  $\vartheta$. We denote this $A$-bimodule by $A^\vartheta$.  In a similar way one can introduce a left-twisted bimodule ${}^\vartheta \!A$. 

We are interested in constructing deformations of the bimodule $A^\vartheta$ in the category of minimal $A_\infty$-algebras. As explained in Sec. \ref{co-res}, this can be done by means of a suitable resolution of the associated graded algebra $\mathcal{A}=\mathcal{A}^0\oplus\mathcal{A}^1$, where $\mathcal{A}^0=A$ and $\mathcal{A}^1=A^\vartheta$. Quite apparently, the choice of a resolution is highly ambiguous  and different resolutions may generate different classes of deformations. Below, we consider only two simple constructions. 

\subsection{Polynomial and Weyl bimodules} \label{PWB} Given a symmetric algebra   $A=S(V^\ast)$, we introduce the algebra of endomorphisms $\mathrm{Hom}(A,A)$ of the $k$-vector space $A$, the product being the composition of endomorphisms. Letting $\Lambda(V)$ denote the exterior algebra of $V$, we define the algebra ${B}=\mathrm{Hom}(A,A)\otimes \Lambda(V)$. The standard grading on $\Lambda(V)$ makes ${B}$ into a graded associative algebra with  ${B}^l=\mathrm{Hom}(A,A)\otimes \Lambda^l(V)$.  

Choosing linear coordinates $\{x^i\}$ on $V$ and $\{p_i\}$ on $V^\ast$, we can identify $S(V^\ast)$ with the algebra of polynomials $k[x^1,\ldots, x^n]$.
Then  the $k$-vector space $\mathrm{Hom}(A, A)$ appears to be  isomorphic to the space of formal power series in $p$'s with coefficients in polynomial functions in $x$'s. Upon this identification, the composition of two endomorphisms $a(x,p)$ and $b(x,p)$ is described by the 
Moyal-type product\footnote{One can think of $a(x,p)$ as a normal symbol of a differential operator on $A=k[x^1,\ldots,x^n]$ (perhaps of infinite order). Then the $\bullet$-product corresponds to the composition of differential operators. }
\begin{equation}\label{b-prod}
a\bullet b=a\exp\left({\frac{\stackrel{\leftarrow}{\partial}}{\partial p_i}\frac{\stackrel{\rightarrow}{\partial}}{\partial x^i}}\right)b\,,
\end{equation}
and homogeneous elements of ${B}^l$ are represented by differential forms 
\begin{equation}\label{om}
\omega =\omega^{i_1\cdots i_l}(x,p)dp_{i_1}\wedge\cdots \wedge dp_{i_l}\,.
\end{equation}
By abuse of notation, we use the same symbol $\bullet$ to denote the multiplication  in ${B}$. 

The usual exterior differential $d: B^l\rightarrow B^{l+1}$ with respect to $p$'s,  
\begin{equation}\label{dom}
d\omega =\frac{\partial \omega^{i_1\cdots i_l}}{\partial p_j} dp_j\wedge dp_{i_1}\wedge\cdots \wedge dp_{i_l}\,,
\end{equation}
makes ${B}$ into a differential graded algebra. Using the standard contracting homotopy $h: B^{l}\rightarrow B^{l-1}$, 
\begin{equation}\label{ch}
h\omega=\int_0^1 {dt}t^{l-1}\omega^{i_1\cdots i_l}(x,tp)p_{i_1}dp_{i_2}\wedge\cdots\wedge dp_{i_l}\,,
\end{equation}
one can see that the differential (\ref{dom}) is acyclic in positive degrees and $H^0({B},d)\simeq A$. Thus, the triple $({B}, \bullet, d)$ provides us with a resolution of the associative algebra $A=S(V^\ast)$. To make contact with the notation of the previous sections, one should shift the degree of $B$ by $-1$ in order for the $\bullet$-product to define the map $m'=m'_2$, see (\ref{bbb}).  We will write $\bar{B}=B[1]$ for the desuspension of $B$.

By dimensional reason, all the $A_\infty$-structures on $\bar{B}$ must belong to the subspace $\bigoplus_{k=0,1,2} \mathrm{Hom}(T^k(\bar{B}),\bar{B})$. In particular, the  $A_\infty$-structures of $ \mathrm{Hom}(T^0(\bar{B}),\bar{B})\simeq \bar{B}$, being necessarily of resolution degree $2$, are represented by $2$-forms
\begin{equation}\label{lambda}
    \lambda=\lambda^{ij}(x,p)dp_i\wedge dp_j\,.
\end{equation}
These forms automatically satisfy the defining condition $[\lambda,\lambda]=0$, while compatibility with the $\bullet$-product requires $\lambda$ to lie in the center of the algebra $B$. Since $Z(B)=k\otimes \Lambda(V)\simeq\Lambda(V)$, the compatible $A_\infty$-structures of $\mathrm{Hom}(T^0(\bar{B}),\bar{B})$ are given by $2$-forms (\ref{lambda}) with constant coefficients. Hence, $d\lambda=0$. Using the contracting homotopy (\ref{ch}), one can readily see that the first-order deformation (\ref{dm}) of $A$ is given by the Poisson bracket 
$$
\mu^{_{(1)}}(a,b)=\frac12 \lambda^{ij}\frac{\partial a}{\partial x^i}\frac{\partial b}{\partial x^j}\,,\qquad \forall a,b\in k[x^1,\ldots,x^n]\,.
$$
The whole deformation, being constructed by formulas (\ref{dif-eq}), reproduces the Moyal $\ast$-product
\begin{equation}\label{WM}
a\ast b=a\exp\left(\frac{t}{2}\lambda^{ij}{\frac{\stackrel{\leftarrow}{\partial}}{\partial x^i}\frac{\stackrel{\rightarrow}{\partial}}{\partial x^j}}\right)b\,.
\end{equation}

In order to construct more interesting examples of minimal deformations, e.g. involving  higher structure maps, we should extend the algebra $A$ by its bimodule $A^{\vartheta}$. A suitable resolution of the extended algebra $\mathcal{A}=A\oplus A^{\vartheta}$ is obtained as follows. The action of $\vartheta$ on $V$ induces the action on the dual space $V^\ast$ and then on the space $B$. Notice that the $\bullet$-product on $B$ is $\vartheta$-invariant. This allows us to define the $\vartheta$-twisted bimodule $B^\vartheta$ over $B$ as well as the trivial extension $\mathcal{B}=B\oplus B^\vartheta$, where the first and second summands have degrees $0$ and $1$, respectively. The product of two elements of  $\mathcal{B}$ reads
\begin{equation}\label{W-mod}
(a,b)\bullet (a',b'):=(a\bullet a',a\bullet b'+b\bullet{}^{\vartheta}\!a')\,.
\end{equation}
The action of the differential (\ref{dom}) extends to $\mathcal{B}$ in the following way: 
$$d(a,b)=(da,-db)\,.$$ 
It is obvious that $H(\mathcal{B},d)\simeq \mathcal{A}$. Hence, upon desuspension,  the differential graded algebra $({\mathcal{B}},\bullet, d)$ provides a resolution of its cohomology algebra $\mathcal{A}=A\oplus A^{\vartheta}$.

Note that a constant $2$-form $\lambda=\lambda^{ij}dp_i\wedge dp_j\in B$ belongs to the center of ${\mathcal{B}}$ iff it is $\vartheta$-invariant.
Any such form defines an $A_\infty$-structure, which is compatible with the $\bullet$-product (\ref{W-mod}). Converse  is also true: any compatible $A_\infty$-structure  $\lambda\in B\subset \mathcal{B}$ generating a minimal deformation of $\mathcal{A}$ is given by a $\vartheta$-invariant $2$-form with constant coefficients.  Applying now the general formulas of Lemma \ref{lem2} together with the contracting homotopy (\ref{ch}),  one can easily see that the corresponding deformation of $\mathcal{A}$ is defined by the Moyal $\ast$-product (\ref{WM}). More precisely, 
\begin{equation}\label{WbM}
(a,b)\ast (a',b'):=(a\ast a', a\ast b'+b\ast {}^{\vartheta}\!a')
\end{equation}
for $a,a',b,b'\in A=k[x^1,\ldots, x^n]$. Again, this deformation gives no higher structure maps. 

Consider now minimal deformations that come from $A_\infty$-structures living in the space $\mathrm{Hom}(T^{1}(\bar{\mathcal{B}}), \bar{\mathcal{B}})$. Each such  structure defines and is defined by a differential $D: {\mathcal{B}}\rightarrow {\mathcal{B}}$ that  commutes with $d$. 
Let us examine the differentials of the form 
\begin{equation}\label{Dab}
D(a, b)=(b\bullet \gamma, 0)\,,
\end{equation}
where $(a,b)\in \mathcal{B}$ and $\gamma=\gamma^{ij}(x,p)dp_i\wedge dp_j$ is some  $2$-form of $B$. It is clear that $D^2=0$. 
Verification of the Leibniz identity for $D$ and the $\bullet$-product (\ref{W-mod}) leads to the following conditions on $\gamma$:
\begin{equation}\label{ComC}
 {}^\vartheta\!\gamma=\gamma\,,\qquad \gamma \bullet a={}^{\vartheta}\!a\bullet \gamma\,,\qquad \forall a\in B\,.
\end{equation}
The second condition is enough to check only for the generators $x^i$ and $p_i$. Let us assume that the automorphism $\vartheta: V\rightarrow V$ is diagonalizable, so that 
$$
{}^\vartheta\! p_i=q_i p_i\,,\qquad {}^{\vartheta}\!x^i=q^{-1}_i x^i
$$
for some nonzero $q_i\in k$. The direct check of (\ref{ComC}) for the generators gives the differential equations 
$$
(q^{-1}_i-1)x^i\gamma=\frac{\partial \gamma}{\partial p_i} \,, \qquad (q^{-1}_i-1)p_i\gamma=\frac{\partial \gamma}{\partial x^i}
$$
with the general solution 
$$
\gamma =e^{\sum_{i=1}^n(q^{-1}_i-1)x^ip_i} \lambda\,,
$$
 $\lambda=\lambda^{ij}dp_i\wedge dp_j$ being a $2$-form with constant coefficients. Then the  first condition in (\ref{ComC}) requires the form $\lambda$ to be $\vartheta$-invariant. 
Finally, the requirement $[D,d]=0$ leads to the closedness condition 
$$
d\gamma=0\qquad \Leftrightarrow\qquad {\sum_{i=1}^n(q^{-1}_i-1)x^idp_i}\wedge \lambda=0\,.
$$
To satisfy this last equation we have to assume that only two eigenvalues of $\vartheta$ are different from $1$, say $q_1$ and $q_2$. Then we can take $\lambda=dp_1\wedge dp_2$. It is clear that 
${}^{\vartheta}\lambda=\lambda$ iff $q_1$ and $q_2$ are mutually inverse to each other, so that 
$$
\gamma=e^{(q^{-1}-1)x^1p_1+(q-1)x^2p_2} dp_1\wedge dp_2
$$
for some $q>1$. Upon substitution to (\ref{Dab}), this $\gamma$ generates a nontrivial deformation of the algebra $\mathcal{A}$.  Furthermore,  the first-order deformation $\mu^{_{(1)}}$ gives rise to the third structure map $m_3$. An explicit expression for $m_3$ is obtained by the general formula (\ref{dm}), where $\partial'$ is the Hochschild differential associated to the associative product (\ref{W-mod}), (\ref{bbb}) and $\tilde{h}$ is determined  by (\ref{ch}). After long but straightforward calculations one can find 
\begin{equation}\label{fff}
\begin{array}{rl}
m_3(\alpha_1\otimes \alpha_2\otimes \alpha_3)&=(-1)^{|\alpha_2|}h\big(hD(\alpha_1)\bullet\alpha_2\big)\bullet\alpha_3\\[3mm]
&=\big(b_1\phi(a_2,a_3), \;b_1\phi(a_2,b_3)-b_1 \phi( b_2,{}^{\vartheta}\! a_3)\big) \,.
\end{array}
\end{equation}
Here $\alpha_i=(a_i,b_i)\in \mathcal{A}$ and we introduced the notation 

\begin{equation}\label{fi}
\phi(a,b)=-\varepsilon^{\alpha\beta}\!\!\!\!\!\!\!\!\!\int\limits_{0< u<w< 1}\!\!\!\!\!\!\!\! dudw\Big(\frac{\partial a}{\partial x^\alpha}\Big)\big((1-w)x+w^{\vartheta}\!x\big)\Big(\frac{\partial b}{\partial x^\beta}\Big)\big((1-u)x+u^{\vartheta}\!x\big),
\end{equation}

$$
\alpha,\beta=1,2 \,,\qquad \varepsilon^{\alpha\beta}=-\varepsilon^{\beta\alpha}\,,\qquad \varepsilon^{12}=1\,,\qquad \vartheta=\mathrm{diag}(q,q^{-1},1,\ldots,1)\,.
$$

Thus, whenever $\mathrm{rank}(\vartheta-1)=2$ and $\det\vartheta=1$, there are two families of deformations of the algebra  $\mathcal{A}$: the first one is generated by the central $2$-forms $\lambda\in Z(\mathcal{B})$, while the second is determined by the differentials $D\in \mathrm{Der}(\mathcal{B})$ of the form (\ref{Dab}). Since $\lambda\in B\subset\mathcal{B}$, $D\lambda =0$. This means that both the $A_\infty$-structures on $\mathcal{B}$ are compatible to each other and we may  consider a $2$-parameter family of deformations generated by $t\lambda + s D$. As we have seen, the $\lambda$-deformation just replaces the usual commutative multiplication of polynomials with the Moyal product (\ref{WM}). Actually, the Moyal deformation is not formal as for any given $a,b\in A$ the series (\ref{WM}) contains only finitely many terms. Hence, we can equate  $t$ to any element of $k$, say $2$. 
Suppose further that the form $\lambda$ is non-degenerate and $\omega=\lambda^{-1}$. Then $(V,\omega)$ is a symplectic vector space endowed with a symplectomorphism $\vartheta\in \mathrm{Sp}(V)$. For  $t=2$, $s=0$ the aforementioned  family of deformations  degenerates to a bimodule over the polynomial Weyl algebra, where the right action is twisted by $\vartheta$.   Letting now $s$ to be  a nonzero parameter, we get a formal deformation of the Weyl bimodule (\ref{WbM}) in the category of  $A_\infty$-algebras. One could arrive at this deformation directly starting from a resolution of the Weyl bimodule.  It turns out that an appropriate resolution is obtained from $(\mathcal{B},\bullet,d)$ by a mere  replacement of the $\bullet$-product (\ref{b-prod}) with the following one:
$$
a\bullet b=a\exp\left(
{\frac{\stackrel{\leftarrow}{\partial}}{\partial p_i}\frac{\stackrel{\rightarrow}{\partial}}{\partial x^i}}+  \lambda^{ij}{\frac{\stackrel{\leftarrow}{\partial}}{\partial x^i}\frac{\stackrel{\rightarrow}{\partial}}{\partial x^j}}\right)b\,,
$$
In \cite{ShSk}, this resolution was called the  { Vasiliev resolution}. The differential (\ref{Dab}) is also modified. To satisfy Eq. (\ref{ComC}) we should now take 
$$
\gamma=e^{\langle p,{}^\vartheta\! x-x\rangle + \lambda(p,{}^{\vartheta}\! p)}\lambda_\vartheta(dp,dp)\,,\qquad
\lambda_\vartheta(dp,dp)=\lambda(dp-{}^{\vartheta}\!dp,dp-{}^{\vartheta}\!dp)\,.
$$
Here the triangle brackets denote the natural pairing and $\lambda(u,v)=\lambda^{ij}u_iv_j$. Then the  first-order deformation $\mu^{_{(1)}}=m_3$ has a more complicated form 
$$
\begin{array}{rl}
m_3(\alpha_1\otimes \alpha_2\otimes \alpha_3)&=(-1)^{|\alpha_2|}h\big(hD(\alpha_1)\bullet\alpha_2\big)\bullet\alpha_3\\[3mm]
&=\big(b_1\ast \Phi(a_2,a_3), \;b_1\ast\Phi(a_2,b_3)-b_1\ast\Phi (b_2,{}^{\vartheta}\! a_3)\big) \,,
\end{array}
$$
where 
\begin{equation}\label{Fi}
\Phi(a,b)=-\int\limits_{0<u<w<1}dudw e^{\langle p_1,(1-w)x+w{}^{\vartheta}\!x\rangle+\langle p_2,(1-u)x+u{}^{\vartheta}\!x\rangle}
\end{equation}
$$
\times e^{\lambda(p_1,p_2)+w\lambda(p_1, {}^{\vartheta}\!p_1+{}^{\vartheta}\!p_2-p_2)+u\lambda(p_2, {}^{\vartheta}\!p_2+{}^{\vartheta}\!p_1-p_1)}
$$
$$
\left.\times e^{w^2\lambda(p_1,{}^{\vartheta}\!p_1) +uw[\lambda(p_1,{}^{\vartheta}\!p_2)+\lambda(p_2,{}^{\vartheta}\!p_1)]+u^2\lambda(p_2,{}^{\vartheta}\!p_2)}\lambda_{\vartheta}(p_1,p_2)a(x_1) b(x_2)\right|_{x_1=x_2=0}\,,
$$
$$
p_1=\left\{\frac{\partial}{\partial x_1^i}\right\}\,,\qquad p_2=\left\{\frac{\partial}{\partial x_2^i}\right\}\,.
$$
By construction, $m_3$ is a nontrivial Hochschild cocycle representing an element of  $HH^3(\mathcal{A}, \mathcal{A})$. 
Writing down the closedness condition for $m_3$, one can see that it is equivalent to the fact that $\Phi$ is a $2$-cocycle of the Weyl algebra $A$  with values in the left-twisted bimodule ${}^\vartheta \!A$, i.e.,
$$
{}^{\vartheta}\!a\ast\Phi(b,c)-\Phi(a\ast b,c)+\Phi(a,b\ast c)-\Phi(a,b)\ast c=0\,.
$$
Such cocycles are closely related to the symplectic reflection algebras \cite{EG}; their integral representation (\ref{Fi})  was first derived in  \cite{SkSh}.  Switching off the Moyal deformation by setting $\lambda=0$ in the exponential functions (\ref{Fi}),  we come back to the expression (\ref{fi}).

\subsection{Quantum polynomial superalgebras} In this section, we will generalize the above example of deformation in two directions. For one thing, we will consider more general extensions of polynomial algebras that involve several automorphisms; for another, we will introduce more general class of resolutions to deform these algebras.  

Let $V$ be an $n$-dimensional vector space over $k$ and let $\Gamma \subset GL(V)$ be a finitely generated, abelian subgroup acting semi-simply on $V$. The group $\Gamma$, being finitely generated and abelian, is isomorphic to the direct product $\mathbb{Z}^k\times \mathbb{Z}_{k_1}\times \mathbb{Z}_{k_2}\times\cdots\times \mathbb{Z}_{k_l}$. Let $\{\vartheta_1,\vartheta_2,\ldots,\vartheta_m\}\subset \Gamma$ denote the generators of $\Gamma$. Since the action of $\Gamma$ in $V$ is semi-simple,  one can chose a basis $\{p_i\}\subset V$ in such a way that 
$$
\vartheta_a p_i=q_{ai}p_i\,,\qquad  i=1,\ldots, n\,,\quad a=1,\ldots, m\,,
$$
for some nonzero  $q_{ai}\in k$. The action of $\Gamma$ extends naturally to the symmetric algebra $S(V)\simeq k[p_1,\ldots,p_n]$.  Geometrically, one can regard the generators $p_i$ as coordinates on the dual vector space $V^\ast$.  

Given the group $\Gamma$, we extend the vector space $V^\ast$ to a quantum superspace $W$ by adding $m$ `odd coordinates' $\pi_a$. The coordinates are assumed to satisfy the commutation relations
\begin{equation}\label{xx}
    p_ip_j-p_jp_i=0\,,\quad \pi_a p_i-q_{ai} p_i\pi_a=0\,,\quad \pi_a\pi_b-\pi_b\pi_a=0\,,\quad (\pi_a)^2=0\,,
\end{equation}
and we prescribe them the following degrees:
$$
|p_i|=0\,,\qquad |\pi_a|=-1\,.
$$
The Grassmann parity of the coordinates is induced by this $\mathbb{Z}$-gadding. The algebra generated by $p$'s and $\pi$'s satisfies the PBW property, so that any its element can be written as a $p\pi$-ordered polynomial  $f(p,\pi)$. We will refer to this algebra as the {\it algebra of quantum polynomials}  \cite{A}, \cite{R}. 

The quantum superspace $W$ can be endowed with  a differential calculus \cite{D}, \cite{KU}. By definition,  the DG-algebra of differential forms $\Omega(W)=\bigoplus_{p\geq 0} \Omega^p(W)$ is generated by the coordinates $p_i$, $\pi_a$, and their differentials $dp_i$, $d\pi_a$ of degrees 
$$
|dp_i|=1\,,\qquad |d\pi_a|=0\,.
$$
The exterior differential $d:\Omega^p(W)\rightarrow \Omega^{p+1}(W)$ is defined now as a degree 1 derivation of $\Omega(W)$ squaring to zero:\footnote{To simplify formulas, we do not write the wedge product. }
\begin{equation}\label{differ}
d(\alpha\beta)=(d\alpha)\beta+(-1)^{|\alpha|}\alpha d\beta\,,\qquad d^2=0\,.
\end{equation}

The ideal generated by (\ref{xx}) in the free algebra on the generators $p$'s and $\pi$'s should now be extended to an ideal in the differential algebra freely generated by the coordinates and their differentials. 
Applying $d$ to Rels. (\ref{xx}) and assuming the differentials $dp_i$ and $d\pi_a$ to be  linearly independent over $\Omega^0(W)$, we get
\begin{equation}\label{xdx}
\begin{array}{c}
    p_idp_j-dp_j p_i=0\,,\quad \pi_a dp_i  + q_{ai} dp_i\pi_a=0\,,\quad d\pi_a p_i -q_{ai} p_i d\pi_a =0\,, \\[3mm]
\pi_a d\pi_a-d\pi_a \pi_a=0 \,,\qquad \pi_a d\pi_b+d\pi_b \pi_a=0\,,\quad a\neq b\,.
    \end{array}
\end{equation}
From the equation $d^2=0$ it then follows immediately that
\begin{equation}\label{dxdx}
\begin{array}{c}
    dp_idp_j+dp_jdp_i=0\,,\qquad d\pi_a dp_i - q_{ai}dp_i d\pi_a =0\,, \\[3mm]
d\pi_a d\pi_b+d\pi_b d\pi_a=0\,,\quad a\neq b\,.
    \end{array}
\end{equation}
Taken together Rels. (\ref{xx} -- \ref{dxdx})  define the Wess--Zumino (WZ) complex associated to a quantum $R$-matrix obeying the additional condition $R^2=1$, see \cite{WZ}, \cite{D}.  

It is known \cite{D1} that the cohomology of the WZ complex $(\Omega^\bullet(W),d)$ is nested in degree zero, 
\begin{equation}\label{HWZ}
    H^\bullet(\Omega, d)\simeq H^0(\Omega, d)\simeq k\,.
\end{equation}
Moreover,  it is not hard to write a contracting homotopy $h:\Omega^{p}(W)\rightarrow \Omega^{p-1}(W)$ leading to this conclusion, see \cite{D}.

The above WZ complex can further be extended to the so-called {\it quantum Weyl superalgebra} \cite{D}, \cite{KU}, \cite{G-Z}. This is achieved by introducing the partial derivatives $\partial^i,\partial^a: \Omega (W)\rightarrow\Omega(W)$ through the relation 
\begin{equation}
    d=dp_i\partial^i+d\pi_a \partial^a\,. 
\end{equation}
It follows immediately that 
$$
\partial^i p_j=\delta^i_j\,,\qquad \partial^a \pi_b=\delta^a_b\,,\qquad \partial^i\pi_a=0\,,\qquad \partial^a p_i=0\,.
$$
Denoting $\partial^i=x^i$, $\partial^a=\theta^a$ and setting $$|x^i|=0\,,\qquad |\theta^a|=1\,,$$ we define  $\mathcal{B}'$ to be the DG-algebra generated by the elements 
\begin{equation}\label{gen}
x^i,\, \theta^a,\, p_i, \,\pi_a,\, dp_i,\, d\pi_a 
\end{equation}
subject to Rels. (\ref{xx}), (\ref{xdx}), (\ref{dxdx}), and 
$$
\begin{array}{c}
    x^i p_j-p_jx^i=\delta^i_j\,,\quad x^i\pi_a -q_{ai}\pi_a x^i=0\,,
    \\[3mm]
    $$\theta^a \pi_a +\pi_a\theta^a=1\,,\quad \theta^a \pi_b -\pi_b\theta^a=0\,,\quad p_i\theta^a -q_{ai} \theta^a p_i=0\,,\\[3mm]
x^i dp_j-dp_jx^i=0\,,\quad x^id\pi_a-q_{ai}d\pi_a x^i=0\,,\quad  dp_i\theta^a+q_{ai}\theta^a dp_i=0\,,\\[3mm]
\theta^a d\pi_a-d\pi_a \theta^a=0\,,\quad \theta^a d\pi_b+d\pi_b \theta^a=0\,,\quad a\neq b\,.\\[3mm]
x^ix^j-x^jx^i=0\,,\quad \theta^a x^i-q_{ai}x^i\theta^a=0\,,\quad \theta^a\theta^b-\theta^b\theta^a=0\,,\quad a\neq b\,,\quad (\theta^a)^2=0
\end{array}
$$
(no summation over repeated indices).  The action of the differential $d$ extends from $\Omega(W)$ to $\mathcal{B}'$ by setting $dx^i=d\theta^a=0$. The DG-algebra $(\mathcal{B}',d)$ enjoys the PBW property and we can represent its elements by ordered polynomials in the variables (\ref{gen}).  The subalgebra generated by the elements $(x^i, p_j, \theta^a, \pi_b)$ is called the { quantum Weyl superalgebra} \cite{D}, \cite{GZ};  it contains  the subalgebra  $\mathcal{A}=\ker d$ generated by $x$'s and $\theta$'s. The latter is clearly isomorphic to the algebra of quantum polynomials $\Omega^0(W)$. Furthermore, it follows from  (\ref{HWZ}) that $H(\mathcal{B}',d)\simeq \mathcal{A}$.

In order to make $(\mathcal{B}',d)$ into a resolution of the algebra $\mathcal{A}$ we prescribe the following bi-degrees to its generators:
$$
\deg x^i=(0,0)\,,\quad \deg p_i=(0,0)\,,\quad \deg \theta^a=(1,0)\,,\quad \deg \pi_a=(-1,0)\,,
$$
$$
\deg dp_i=(0,1)\,,\qquad \deg d\pi_a=(-1,1)\,.
$$
Then $|a|$ coincides with the total degree of the element $a\in \mathcal{B}'$. Although the pair $(\mathcal{B}', d)$ meets all the defining conditions of a resolution, it appears to be too small to generate nontrivial deformations of $\mathcal{A}$. For this reason we consider its completion, denoted by $\mathcal{B}$, with respect to the ideal generated by $\{p_i\}$. The elements of $\mathcal{B}$ are formal power series in $p$'s with coefficients being polynomial functions in the other variables. 

In case $m=1$, the quantum polynomial superalgebra $\mathcal{A}=\mathcal{A}^0\oplus \mathcal{A}^1$ is clearly isomorphic to the trivial extension of the polynomial algebra $A=k[x^1,\ldots,x^n]$ by the bimodule $A^\vartheta$, where $\vartheta$ is the automorphism associated to a single generator $\theta$ of degree $1$.  This situation has been already considered in the previous subsection. 

Let us now describe the $A_\infty$-structures from $\mathrm{Hom}(T^0(\bar{\mathcal{B}}),\bar{\mathcal{B}})\simeq \bar{\mathcal{B}}$  that are compatible with the associative product  and the differential $d$ in $\mathcal{B}$.  These are given by the  $d$-cocycles belonging to the center of $\mathcal{B}$. First, we note that any nonzero element of the form $\pi_a f^a$ cannot be  a  $d$-cocycle, while an element $\theta^a g_a$ does not belong to the center $Z(\mathcal{B})$ unless it is zero. So, we can restrict ourselves  to $\theta$- and $\pi$-independent elements of $\mathcal{B}$. These constitute a differential subalgebra spanned by the forms 
$$
f=g(x,p,dp) (d\pi_1)^{n_1}(d\pi_2)^{n_2}\cdots (d\pi_m)^{n_m}\,,\qquad n_a=0,1,\ldots
$$
Verifying the commutativity conditions 
\begin{equation}\label{fp}
x^i f-f x^i=0\,, \qquad p_i f-f p_i=0\,,
\end{equation}
we find 
$$
f=e^{-\sum_{i=1}^n (1-\prod_{a=1}^m q^{n_a}_{ai})x^ip_i}g(dp)(d\pi_1)^{n_1}(d\pi_2)^{n_2}\cdots (d\pi_m)^{n_m}
$$
for some differential form $g=g^{i_1\cdots i_s}dp_{i_1}\cdots dp_{i_s}$ with constant coefficients. Renumbering the coordinates $x^i$, if necessary,  we may assume that 
\begin{equation}\label{prod1}
\prod_{a=1}^m q^{n_a}_{ai}=\left\{\begin{array}{ll}
     \neq 1 \,,&  i=1,2,\ldots, k;\\[2mm]
     = 1\,, & i=k+1,\ldots, n\,,
\end{array}\right.
\end{equation}
where $k$ depends on $n_a$. Then the closedness condition $df=0$ restricts the form of basis cocycles to
\begin{equation}\label{f}
f=e^{-\sum_{i=1}^n(1-\prod_{a=1}^m q^{n_a}_{ai})x^ip_i} dp_1\cdots dp_k dp_{\alpha_1}\cdots dp_{\alpha_l} (d\pi_1)^{n_1}\cdots (d\pi_m)^{n_m} 
\end{equation}
for some $\alpha_j>k$. Finally, the conditions 
\begin{equation}\label{fpi}
\theta^a f-(-1)^{|f|}f\theta^a=0\,,\qquad \pi_a f-(-1)^{|f|}f\pi_a=0
\end{equation}
impose the following set of restrictions on the numbers $n_a$ and $\alpha_j$:
\begin{equation}\label{prod2}
q_{a1}q_{a2}\cdots q_{ak}q_{a \alpha_1}\cdots q_{a \alpha_l}(-1)^{\sum_{b\neq a}n_b}=1\,,\qquad \forall a=1,2,\ldots,m\,.
\end{equation}
Applying the differential to the second equations in (\ref{fp}) and (\ref{fpi}) yields the other commutativity conditions
$$
dp_i f-(-1)^{|f|}f dp_i=0\,,\qquad d\pi_a f-fd\pi_a=0\,.
$$
Finally, note that the cocycles (\ref{f}) are all nontrivial when viewed as elements of the subcomplex $(Z(\mathcal{B}), d)$.
In such a way we arrive at the next statement. 

\begin{theorem}
The cohomology group $H(Z(\mathcal{B}),d)$ is generated by the cocycles (\ref{f}) with parameters obeying (\ref{prod1}) and (\ref{prod2}). 
\end{theorem}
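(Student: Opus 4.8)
The statement is in essence a summary of the normal-form analysis carried out in the paragraphs above, and I would package it into three steps. First, since $\mathcal B$ is a DG-algebra its graded center $Z(\mathcal B)$ is preserved by $d$, so $(Z(\mathcal B),d)$ is a subcomplex and $H(Z(\mathcal B),d)$ is well defined. The crucial preliminary reduction is that every class of $H(Z(\mathcal B),d)$ admits a representative independent of the generators $\theta^a$ and $\pi_a$. Expanding a general element in the PBW basis and using the $q$-commutation relations ($\theta^a p_i=q_{ai}p_i\theta^a$, $\theta^a x^i=q_{ai}x^i\theta^a$, etc.), one checks that a nonzero term containing a $\theta^a$ can never be central; hence a central element is already $\theta$-free. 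For a $\theta$-free element the differential acts by $d=dp_i\partial^i+d\pi_a\partial^a$ with $\partial^i=x^i$, $\partial^a=\theta^a$, and a short computation shows that its $\pi$-dependent part contributes a nonvanishing $d\pi$-term to $df$; thus $d$-closedness forces $\pi$-independence as well. Consequently $H(Z(\mathcal B),d)\cong H(\mathcal C_0,d)$, where $\mathcal C_0=Z(\mathcal B)\cap\mathcal C$ and $\mathcal C\subset\mathcal B$ is the $d$-stable subalgebra of forms $g(x,p,dp)(d\pi_1)^{n_1}\cdots(d\pi_m)^{n_m}$.

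Second, I would impose the remaining defining conditions on $\mathcal C$ in the order used above. Centrality with respect to $x^i$ and $p_i$ (Rels.~(\ref{fp})) becomes a system of first-order $q$-difference equations whose general solution carries the exponential prefactor, $f=e^{-\sum_i(1-\prod_a q^{n_a}_{ai})x^ip_i}g(dp)(d\pi_1)^{n_1}\cdots(d\pi_m)^{n_m}$ with $g$ a constant-coefficient form in the $dp_i$. Imposing $df=0$ and using $d(dp_i)=d(d\pi_a)=0$ and $dx^i=0$, the only surviving term of $df$ is proportional to $\sum_i(1-\prod_a q^{n_a}_{ai})\,x^i\,dp_i\wedge g$; after renumbering the coordinates so that (\ref{prod1}) holds this vanishes exactly when $dp_1\wedge\cdots\wedge dp_k$ divides $g$, which is the normal form (\ref{f}). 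Finally, the graded-commutativity conditions (\ref{fpi}) for $\theta^a$ and $\pi_a$ reduce to the scalar constraints (\ref{prod2}) on the exponents $n_a$ and on the extra indices $\alpha_j$, while the conditions involving $dp_i$ and $d\pi_a$ are then automatic, being obtained by applying the derivation $d$ to (\ref{fp}) and (\ref{fpi}). This identifies the $d$-closed central elements with the linear span of the cocycles (\ref{f}) satisfying (\ref{prod1}) and (\ref{prod2}).

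Third, one verifies these classes are nonzero and linearly independent in cohomology. The argument is short: if a cocycle $f$ of the form (\ref{f}) were exact in $Z(\mathcal B)$, say $f=dg$ with $g\in Z(\mathcal B)$, then by the first step $g$ is $\theta$- and $\pi$-free, hence of the exponential-times-constant-form shape described above; but then $dg$ necessarily contains an explicit factor $x^i$ sitting outside the exponential (arising from $d(x^ip_i)=x^i\,dp_i$), so $dg$ either vanishes or fails to lie in $\mathcal C_0$, and in neither case can it equal the nonzero element $f$. Hence the classes $[f]$ with parameters (\ref{prod1})--(\ref{prod2}) span $H(Z(\mathcal B),d)$, as claimed. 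The main obstacle is the first step: turning the one-line assertion that $\theta$- and $\pi$-dependent cocycles of $Z(\mathcal B)$ are exact into a proof requires a filtration by $\theta$- and $\pi$-degree, together with careful tracking of the Clifford-type relation $\theta^a\pi_a+\pi_a\theta^a=1$ and of the $q$-factors in the explicit form of $d$; once this reduction is established, the remaining steps are routine $q$-calculus and bookkeeping.
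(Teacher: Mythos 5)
Your proposal follows the same route as the paper: the theorem there is stated as a summary of the normal-form computation carried out in the preceding paragraphs, and your three steps (reduction to $\theta$- and $\pi$-independent elements, solving the centrality conditions (\ref{fp}) to get the exponential prefactor, then imposing $df=0$ and (\ref{fpi}) to arrive at (\ref{f}) with the constraints (\ref{prod1})--(\ref{prod2}), and finally checking nontriviality) reproduce that computation in the same order, including the observation that the $dp_i$- and $d\pi_a$-commutativity conditions follow by applying $d$ to (\ref{fp}) and (\ref{fpi}). You also correctly identify the first reduction as the step that the paper asserts in one line and that genuinely needs a PBW/filtration argument.

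One inference in your third step does not go through as written. You argue that if $f=dg$ with $g\in Z(\mathcal{B})$, then ``by the first step $g$ is $\theta$- and $\pi$-free.'' But in the first step $\theta$-freeness came from centrality alone, whereas $\pi$-freeness came from \emph{closedness} (a nonzero $\pi_a f^a$ can perfectly well be central; it just cannot be a cocycle). A potential primitive $g$ is central but not closed, so you cannot exclude $\pi$-dependent terms in $g$, and $dg$ then acquires $d\pi_a$-terms that could in principle reproduce the $(d\pi_a)^{n_a}$ factors in (\ref{f}). To close this you need a separate argument --- e.g.\ comparing the bi-degrees and the explicit $x$-dependence of $d$ applied to a general central element, or restricting the whole discussion to the subcomplex of $\theta$-, $\pi$-free central forms from the outset and showing that the differential there has the stated cohomology. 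The paper itself only asserts nontriviality, so this is a gap you share with the source rather than a deviation from it, but your stated justification is not yet a proof.
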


\begin{example}\label{example} Let $\Gamma=\mathbb{Z}_2$ act on $V=\mathbb{R}^2$ by the  reflection ${}^\vartheta \!x^i=-x^i$, $i=1,2$. Then the algebra $\mathcal{A}=\mathcal{A}_{\mathbb{Z}_2}$ is generated by the three elements $ x^1,x^2$, and $\theta$ subject to the relations
\begin{equation}\label{AZ2}
x^ix^j-x^jx^i=0\,,\qquad x^i\theta +\theta x^i=0\,,\qquad \theta^2=0\,.
\end{equation}
Hence, all $q_i=-1$. The algebra $\mathcal{A}=\mathcal{A}^0\oplus\mathcal{A}^1$ is isomorphic to the trivial extension of the polynomial algebra $A=\mathbb{R}[x^1,x^2]$ by the right-twisted bimodule $A^\vartheta$.  

According to the above considerations, the space $H(\mathcal{B},d)$ is spanned by the cocycles
\begin{equation}
     dp_1dp_2 (d\pi)^{2m}\,,\quad 
    e^{-2x^ip_i}dp_1dp_2 (d\pi)^{2m+1}\,,\quad (d\pi)^{2m}\,,\quad m=0,1,\ldots\,,
\end{equation}
which define mutually compatible $A_\infty$-structures on $\mathcal{B}=\mathcal{B}_{\mathbb{Z}_2}$.

As an associative algebra $H(\mathcal{B},d)$ is generated by the four basis cocyles 
$$
 dp_1dp_2\,,\qquad e^{-2x^ip_i}dp_1dp_2 d\pi\,, \qquad (d\pi)^2\,,\qquad 1\,.
$$
Of these cocycles only the first two have total degree $1$ when regarded as elements of the  algebra $ \bar{\mathcal{B}}$. 
The first cocycle generates the usual Moyal's deformation of the polynomial algebra, while the second leads to  higher structure maps. In particular, the first-order deformation associated to the second cocycle gives a non-zero map $m_3$, which is similar in form to that considered in Sec. \ref{PWB}.

The above result can easily be extended to the Klein group $\Gamma=\mathbb{Z}_2\times \mathbb{Z}_2$ acting on $V=\mathbb{R}^4$ by 
\begin{equation}\label{KG}
    {}^{\vartheta_1}\! x^\alpha =-x^\alpha\,,\qquad {}^{\vartheta_1}\! y^\alpha= y^\alpha\,,\qquad {}^{\vartheta_2 }\! x^\alpha =x^\alpha\,,\qquad {}^{\vartheta_2} \!y^\alpha= -y^\alpha\,,
\end{equation}
$(x^1,x^2,y^1, y^2)$ being coordinates on $\mathbb{R}^4$. The corresponding noncommutative superspace $W$ is obtained by adding the pair of coordinates $\theta^1$ and $\theta^2$ in degree $1$. In fact, the algebra of quantum polynomials on $W$ is given by the tensor product $\mathcal{A}=\mathcal{A}_{\mathbb{Z}_2}\otimes \mathcal{A}_{\mathbb{Z}_2}$, where each factor is isomorphic to the algebra (\ref{AZ2}),  and the  same is true for the resolution algebra $\mathcal{B}=\mathcal{B}_{\mathbb{Z}_2}\otimes \mathcal{B}_{\mathbb{Z}_2}$. By the K\"unneth formula the algebra $H(\mathcal{B},d)$ is multiplicatively generated by the cocycles  
\begin{equation}\label{basC}
\begin{array}{c}
 dx^1dx^2\,,\qquad dy^1dy^2\,,\qquad e^{-2x^\alpha p^x_\alpha}dp^x_1dp^x_2 d\pi_1\,,\qquad e^{-2y^\alpha p^y_\alpha}dp^y_1dp^y_2d\pi_2\,,\\[3mm]
d\theta^1 d\theta^1\,,\qquad d\theta^2d\theta^2\,,\qquad 1\,,
\end{array}
\end{equation}
where $p^x_\alpha$ and $p^y_\alpha$ are coordinates dual to $x^\alpha$ and $y^\alpha$.  The basis cocycles of the first line, when regarded as elements of $ \bar{\mathcal{B}}$, carry total degree $1$. Again, the first two cocycles in (\ref{basC}) generate the Moyal deformation of the algebra $\mathcal{A}=\mathcal{A}_{\mathbb{Z}_2}\otimes \mathcal{A}_{\mathbb{Z}_2}$, which is invariant under the Klein automorphisms (\ref{KG}). This deformation is not formal and we can set the corresponding deformation parameters to $1$ just re-scaling the the generators $x^\alpha$ and $y^\alpha$.  The resulting $\ast$-product algebra (\ref{WbM}) is of primary importance  in $4d$  higher spin theory, where it is called the {\it higher spin algebra}. For an introduction, see \cite{Vasiliev:1999ba}, \cite{Didenko:2014dwa}, \cite{Sharapov:2017yde}.  It is the higher spin algebra alone that dictates the spectrum of massless higher spin fields, the form of free field equations and their gauge symmetries.  The third and fourth cocycles in (\ref{basC}) give rise to a $2$-parameter family of  higher structure maps $m_n$, $n\geq 3$, making $\mathcal{A}$ into a genuine  $A_\infty$-algebra. In the presence of the Moyal deformation,  these higher structure maps correspond to the formal interaction vertices of higher spin fields. It is therefore concluded that all the consistent interactions of massless higher spin fields are controlled by deformations of the higher spin algebra in the category of minimal $A_\infty$-algebras. 

The other nontrivial cocycles of $Z(\mathcal{B})$, while not related to the algebra deformation, may also be of some physical interest. Let us point out the following families of differential forms:
$$
\begin{array}{c}
    dx^1dx^2dy^1dy^2(d\pi_1)^{2m_1}(d\pi_2)^{2m_2}\,,\\[3mm]
     e^{-2x^\alpha p^x_\alpha}dx^1dx^2dy^1dy^2(d\pi_1)^{2m_1+1} (d\pi_2)^{2m_2}\,,\\[3mm]
     e^{ -2y^\alpha p^y_\alpha}dx^1dx^2dy^1dy^2(d\pi_1)^{2m_1} (d\pi_2)^{2m_2+1}\,,\\[3mm]
    e^{-2x^\alpha p^x_\alpha -2y^\alpha p^y_\alpha}dx^1dx^2dy^1dy^2(d\pi_1)^{2m_1+1} (d\pi_2)^{2m_2+1}\,.
\end{array}
$$
These are the cocycles of the maximum resolution degree, namely, $4$. By making use of formula (\ref{dm}), one can convert them to the Hochschild cocycles of the higher spin algebra. From field theoretical standpoint, these Hochschild cocycles correspond to $4$-forms on space-time manifold. When coupled to the Moyal deformation, they may be interpreted as gauge invariant contributions to the on-shell Lagrangian of higher spin fields.  A detailed discussion of these and other physical implications is beyond the scope of this paper. We are going to report on them elsewhere.

\end{example}

\subsection*{Acknowledgments} We are grateful to Vasiliy Dolgushev for a useful discussion and correspondence.


\end{document}